\renewcommand{\section}{\@startsection%
{section}%
{1}%
{0em}%
{1.7em}%
{1.2em}%
{\normalfont\large\centering\bfseries}}
\renewcommand{\@seccntformat}[1]%
{\csname the#1\endcsname.\hspace{0.5em}}
\numberwithin{equation}{section}
\newtheorem{theorem}{Theorem}[section]
\newtheorem{proposition}[theorem]{Proposition}
\newtheorem{lemma}[theorem]{Lemma}
\newtheorem{corollary}[theorem]{Corollary}
\theoremstyle{definition}
\newtheorem{definition}[theorem]{Definition}
\newtheorem{remark}[theorem]{Remark}
\newcommand{\abs}[1]{\left|#1\right|}
\newcommand\rE[1]{\upharpoonleft_{#1}}
\newcommand\lrb[1]{\left\lbrace#1 \right\rbrace}
\newcommand\lrp[1]{\left(#1 \right)}
\newcommand\h{{\initsp}}
\newcommand\C{{\mathbb C}}
\newcommand\N{{\mathbb N}}
\newcommand\R{{\mathbb R}}
\newcommand\cc[1]{\overline {#1}}
\newcommand\vb[1]{\!\left\langle {#1} \right |}
\newcommand\vk[1]{\left |{#1}\right\rangle\!}
\newcommand\ip[2]{\left\langle {#1},{#2} \right\rangle}
\newcommand\no[1]{\left\| {#1} \right\|}
\newcommand{\initsp}{\mathsf{h}}
\newcommand\unit{\hbox{\rm 1\kern-2.8truept l}}
\newcommand\qF[1]{\mathcal F[#1]}
\newcommand\tr[1]{{{\rm tr}}\left(#1\right)}
\newcommand\wm[1]{\left\langle {#1}\right\rangle}
\DeclareMathOperator{\im}{Im}
\DeclareMathOperator{\re}{Re}
\DeclareMathOperator{\dom}{dom}
\DeclareMathOperator{\ran}{ran}
\DeclareMathOperator{\Span}{span}
\newcommand{\mathleft}{\@fleqntrue\@mathmargin0pt}
\newcommand{\mathcenter}{\@fleqnfalse}
\begin{document}
\begin{titlepage}
\title{On the analytical approach to infinite-mode Boson-Gaussian states
\footnotetext{%
Mathematics Subject Classification(2010):
Primary 81S05; 
Secondary 78M05, 
60B15. 
}
\footnotetext{%
Keywords: 
Analytic quantum Gaussian state, non-commutative Fourier transform, Weyl  moments
}
}
\author{
\textbf{Jorge R. Bola\~nos-Serv\'in, Roberto Quezada and Josu\'e I. Rios-Cangas}
\\
\small Departamento de Matem\'aticas\\[-1.6mm] 
\small Universidad Aut\'onoma Metropolitana, Iztapalapa Campus\\[-1.6mm]
\small San Rafael Atlixco 186, C.P. 09340, Iztapalapa, Mexico City.\\[-1.6mm] 
\small \texttt{jrbs@xanum.uam.mx,\quad roqb@xanum.uam.mx,\quad jottsmok@xanum.uam.mx}
}
\date{}
\maketitle
\vspace{-4mm}
\begin{center}
\begin{minipage}{5in}
  \centerline{{\bf Abstract}} \bigskip
We develop an analytical approach to quantum Gaussian states in infinite-mode representation of the Canonical Commutation Relations (CCR's), using Yosida approximations to define integrability of possibly unbounded observables with respect to a state $\rho$ ($\rho$-integrability). It turns out that all elements of the commutative $*$-algebra generated by a possibly unbounded $\rho$-integrable observable $A$, denoted by $\langle A\rangle$, are normal and $\rho \, $-integrable. Besides, $\langle A\rangle$ can be endowed  with the well-defined norm $\|\cdot\|_\rho:= \tr{\rho |\cdot| }$. Our approach allows us to rigorously establish fundamental properties and derive key formulae for the mean value vector and the covariance operator. We additionally show that the covariance operator $S$ of any Gaussian state is real, bounded, positive, and invertible, with the property that $S-iJ\geq 0$, being $J$ the multiplication operator by $-i$ on $\ell_2({\mathbb N})$.  
\end{minipage}
\end{center}
\thispagestyle{empty}
\end{titlepage}

\section{Introduction} 
Quantum Gaussian states are a natural extension to the quantum setting of the notion of Gaussian or normal distributions in classical probability. A pedagogical introduction based on Bochner's theorem on characteristic functions, addressed to classical probabilistic, can be found in Parthasarthy's paper \cite{MR2662722}. Nowadays, such class of quantum states together with quantum Gaussian channels and semigroups are intensively used in Quantum Physics, for instance, to model noise in electromagnetic communications. Due to their importance in quantum optics and quantum information theory, see \cite{MR4530560,MR2304059,MR2513281,Caruso_2006,Caruso_2008,Kruger06,MR3190189,Zhang18,Hackl18EB} and the references therein; the need for a rigorous mathematical foundation of their analytical structure  has  become a priority and thus has induced researchers to actively work on deriving mathematical frameworks where gaussianity properties are better understood \cite{MR2964163,MR4537381,MR4133450,MR4158636,MR4440218,MR2662722,MR4218302,MR4014466,MR3666198}.  \\ 
While most previous research, including Parthasarathy's approach, focuses on the finite-mode scenario, the infinite-mode case presents a fascinating subject not only from a physical perspective but also due to the emergence of additional mathematical complexities. In \cite{MR4014466} the authors developed further Parthasarathy's approach in an infinite-mode scheme using Bogoliubov transformations, Shale operators and an extension of Williamson's theorem to infinite dimension, which leads to a characterization of the infinite-mode covariance operator. A more direct approach however, starting with Parthasarathy's definition and using Yosida approximations to rigorously define the integrability of unbounded observables with respect to a state, allowed us to extend Parthasarathy's analytical approach to the infinite-mode case, rigorously establishing fundamental properties and deriving key formulae for the mean value vector and the covariance operator, which are often cited  in the physics literature without a proof. We emphasize the analytical nature of Parthasarthy's approach in contradistinction to the combinatorial approach to gaussianity based on the characterization of normal random variables with all moments, in terms of their second order moments. In the one-mode case, this analytical approach was rigorously detailed in \cite{MR4516426}. This paper seeks to extend these results to the infinite-mode case.    

 The paper layout is as follows. In Section \ref{Weyl-representation-CCR} we briefly discuss the infinite-mode Weyl representation of the CCR's. We introduce the notions of analytic quantum boson Gaussian state and integrability of possible unbounded observables with respect to a state in Section \ref{Analytic-Gaussian-States}. A criterion for integrability of possibly unbounded normal operators with respect to a state $\rho$ is given in Theorem \ref{thm:integrable_norm}, it turns out that all elements in the commutative $*$-algebra generated by a $\rho\,$-integrable normal operator $A$, denoted by $\langle A\rangle$, are $\rho\,$-integrable and normal and $\|\cdot\|_\rho= \tr{\rho |\cdot| }$ is a norm on $\langle A\rangle$. Up to our knowledge, all these results are new. 
 
Finally in Section \ref{Covariance-operator}, under the assumption that the normal operator $p(z)p(u)$ is $\rho\,$-integrable for all $z, u\in\ell_2({\mathbb N})$, where $p(z)$ is the field operator, we deduce mathematically rigorous formulae for the matrix elements of the covariance operator and the mean value vector of Gaussian states. We prove  that for any amenable Gaussian state the covariance operator is real and bounded, i.e. $S\in{\mathcal B}_{\mathbb R}(\ell_2({\mathbb N}))$, positive and invertible with 
\[0\leq \|S^{-1}\|\leq 1\leq \|S\|<\infty\,.\] Moreover, $S-iJ\geq 0$, where $J$ is the multiplication operator by $-i$ on $\ell_2({\mathbb N})$ (see Corollary~\ref{cor:uncertainty-principle}).  Although these results may not be entirely novel, our approach sheds some new light on the mean value vector and covariance operator properties of a boson Gaussian state without relying on  Shale operators or extensions of Williamson's theorem.

\section{Infinite-mode Weyl representation of the CCR's}\label{Weyl-representation-CCR}
 We start by recalling the definition of stabilized infinite tensor product of Hilbert spaces. \begin{definition}
 Given a family $\{\initsp_j\}_{j\geq 1}$ of separable Hilbert spaces and a sequence of unit vectors $\varphi=(\varphi_j)_{j\geq 1}$ with $\varphi_j\in \initsp_j$, the $\varphi$-stabilized infinite tensor product 
 \begin{gather}\label{eq:stabilized-TP}
 \bigotimes_{j\geq 1}\lrp{\initsp_j, \varphi_j}
 \end{gather}
  is defined as the completion of the linear span of 
\begin{gather*}
 F=\{(v_j)_{j\geq1} \, : \; v_j\in \initsp_j, \; \forall j\geq 1 \; \textrm{and} \; v_j= \varphi_j \; \textrm{for all but finitely many} \; j\geq 1\} \nonumber
\end{gather*} 
with respect to the norm induced by the inner product 
\begin{align}\label{stab-inner-product}
 \big\langle (v_1, v_2, \cdots ), (w_1, w_2, \cdots)\big\rangle=\prod_{j\geq 1}\langle v_j, w_j \rangle 
\end{align}
 \end{definition}
 
For any $v=(v_1, \cdots, v_n, \varphi_{n+1}, \cdots),\, w=(w_1, \cdots, w_m, \varphi_{m+1}, \cdots)\in F$, the product in \eqref{stab-inner-product} is finite since by Schwarz inequality,  
\begin{gather*}
\abs{\ip vw}=\prod_{j\geq 1}\abs{\ip{v_j}{w_j}} \leq \prod_{j\geq 1}\no{v_j}\no{w_j}= \prod_{j=1}^n \no{v_j}\prod_{j=1}^m \no{w_j}\,.
\end{gather*}
 
 An element $v$ in $\bigotimes_{j\geq 1}(\initsp_j, \varphi_j)$ has the form
 \begin{align}\label{typical-element}
 v=\lim_n \lrp{v_1\otimes \cdots \otimes v_n \otimes \varphi_{n+1} \otimes \cdots}\,,
 \end{align}
where the limit is in the norm induced by the inner product \eqref{stab-inner-product}. But, assuming $\|v_j\|=1, \, j=1, \cdots, m$, we have for $m>n$, 
\begin{align*}
 &\no{v_1 \otimes  \cdots \otimes v_n \otimes \varphi_{n+1} \otimes \cdots - v_1 \otimes  \cdots \otimes v_m \otimes \varphi_{m+1}  \otimes \cdots}^2
 \\ &=\,\no{v_1\otimes \cdots \otimes v_n \otimes (\varphi_{n+1} \otimes \cdots \otimes \varphi_m- v_{n+1}\otimes\cdots \otimes v_m)\otimes \varphi_{m+1}\otimes\cdots}^2 \\ 
 &=\, 2 - \prod_{j=n+1}^m \ip{\varphi_j}{ v_j}-\prod_{j=n+1}^m\ip{ v_j}{ \varphi_j}\,.
\end{align*}
Thus, to ensure the convergence in \eqref{typical-element} we assume that  
\begin{gather}\label{eq:convergence-condition}
 \sum_{j\geq 1}\abs{\ip{v_j}{\varphi_j}-1}<\infty
\end{gather}
holds, viz. $v_j$ and $\varphi_j$ are close enough to each other for large values of $j$. 
Now, taking $\initsp_j =L_2({\mathbb R})$ and the  vacuum vector $\varphi_j=\unit\in L_2({\mathbb R})$, with $\unit(x)=\pi^{-\frac{1}{4}} e^{-\frac{x^2}{2}}$, for each $j\geq 1$, we denote the stabilized infinite tensor product \eqref{eq:stabilized-TP} by 
\begin{gather}\label{eq:infinite-TP}
\h\colonequals  \bigotimes_{j\geq 1}\big(L_2({\mathbb R}), \unit\big)\,.
\end{gather} or simply by $\initsp=\bigotimes_{j\geq 1}L_2({\mathbb R})$. 

Along the paper, $\Gamma(\ell_2({\mathbb N}))$ denotes the symmetric Fock space constructed on the Hilbert space $\ell_2({\mathbb N})$. There is a natural identification of $\Gamma(\ell_2({\mathbb N}))$ with $\initsp$, such that the corresponding exponential vector $\varepsilon_z\in \Gamma\lrp{\ell_2({\mathbb N})}$ satisfies 
\begin{gather}
\label{infty-exp-vectors}
\varepsilon_z = \otimes_{j\geq 1} \varepsilon_{z_j} := \lim_{n}\lrp{\otimes_{j=1}^n \varepsilon_{z_j} \otimes \otimes_{j\geq n+1}\unit}\,,\quad z=(z_j)_{j\geq 1}\in\ell_2(\mathbb{N})
\end{gather}
where $\varepsilon_{z_j}(x)= e^{\sqrt{2}z_j x-z_j^2/2} \unit(x)$ are the exponential vectors in $L_2(\mathbb{R})$. 
Note that 
\begin{gather*}
\langle \varepsilon_{z_j}, \unit\rangle=\langle \varepsilon_{z_j}, \varepsilon_0\rangle = e^{\langle z_j, 0\rangle}= 1\qquad\mbox{and}\qquad \sum_{j\geq 1}\abs{\ip{\varepsilon_{z_j}}{\unit}-1}=0\,.
\end{gather*}
Thus, one has by \eqref{eq:convergence-condition} that the limit in \eqref{infty-exp-vectors} exists. The density of the exponential domain 
\begin{gather*}
\mathcal{E}=\Span\lrb{\varepsilon_z \, : \, z\in\ell_2(\mathbb{N})} 
\end{gather*}
follows from the density of the corresponding exponential domain in each coordinate. Indeed, given any 
\begin{gather*}
v=\lim_{n}\lrp{\bigotimes_{j=1}^n v_{j} \otimes \bigotimes_{j\geq n+1}\unit}\in \h\,,
\end{gather*}
for $0<\epsilon<1$ and each $1\leq j\leq n$, there exists a linear combination of exponential vectors for which
\begin{gather*}
\no{v_j - \sum_{k=1}^{n_j} \lambda_{jk} \varepsilon_{z_{jk}}} <\frac{\epsilon}{2}\,.
\end{gather*}
So, if $\sigma_n = (\otimes_{j=1}^{n} \sum_{k=1}^{n_j} \lambda_{jk} \varepsilon_{z_{jk}})\otimes \otimes_{j\geq n+1}\unit \in \mathcal{E}$ and $v_n =\otimes_{j=1}^n v_j \otimes \otimes_{j\geq n+1}\unit$, then
\begin{gather*}
\no{v_n-\sigma_n}^2 = \prod_{j=1}^n \no{v_j - \sum_{k=1}^{n_j} \lambda_{jk} \varepsilon_{z_{jk}}}^2 < \lrp{\frac{\epsilon}{2}}^{2n}\,,
\end{gather*}
which taking $n$ large enough so that $\no{v-v_n}<\epsilon/2$, we get     
\begin{gather*}
\no{v-\sigma_n}= \no{v-v_n} + \no{v_n -\sigma_n}< \frac{\epsilon}{2} + \lrp{\frac{\epsilon}{2}}^n <\epsilon\,. 
\end{gather*}

The correspondence  
\begin{gather*}
\varepsilon_u \mapsto e^{-\frac12 \no{z}^2- \ip zu} \varepsilon_{z+u}\,,\qquad z\in\ell_2(\mathbb{N})\,,
\end{gather*}
 is an isometry from $\mathcal{E}$ onto itself. Indeed, the identity
\begin{align*}
\ip{\varepsilon_u}{\varepsilon_v}= &e^{\ip uv}= e^{-\no{z}^2} e^{-\ip zv-\overline{\ip{z}{ u}}}e^{\ip{u+z}{v+z}} \\=&\ip{ e^{-\frac12 \no z^2- \ip zu}\varepsilon_{z+u}}{e^{-\frac12 \no z^2- \ip zv} \varepsilon_{z+v}}
\end{align*}
extends to $\mathcal{E}$ by linearity. Thereby, there exists a unique unitary operator $W_z$ on the space $\Gamma(\ell_2({\mathbb N}))\simeq \bigotimes_{j\geq 1}L_2({\mathbb R})$, satisfying (cf. \cite[Prop.\,7.2]{MR3012668})
\begin{gather}\label{eq:Weyl-displacement}
W_z \varepsilon_u= e^{-\frac12 \no{z}^2- \ip zu} \varepsilon_{z+u}\,,
\end{gather}
which is called the \emph{Weyl operator} (or displacement operator) associated with $z\in\ell_2(\mathbb{N})$. Moreover,  
\begin{align*}
\langle W_z\varepsilon_u, \varepsilon_v\rangle&= e^{-\frac12 \|z\|^2- \langle u, z\rangle} \langle \varepsilon_{z+u}, \varepsilon_v\rangle= e^{-\frac12 \|z\|^2- \langle u, z\rangle} e^{\langle z+u, v\rangle} \\ &= e^{\langle u, v-z\rangle} e^{-\frac12 \|z\|^2- \langle -z, v\rangle}  = \langle \varepsilon_u, e^{-\frac12 \|z\|^2- \langle -z, v\rangle} \varepsilon_{-z +v}\rangle \\&= \langle\varepsilon_u, W_{-z}\varepsilon_v\rangle\,.
\end{align*}
Hence, we conclude by linearity and density that $W_z^*=W_{-z}$. Also, the infinite-modes CCR in Weyl form 
\begin{gather}\label{infty-CCR-weyl}
W_zW_u=e^{-i\im\ip zu}W_{z+u}\,,\qquad W_zW_u=e^{-2i\im\ip zu}W_uW_z\,,\quad z,u\in\ell_2(\mathbb{N})
\end{gather} hold true. Indeed, 
\begin{align*}
W_z W_u\varepsilon_v &= e^{-\frac12 \|u\|^2- \langle u, v\rangle} W_z \varepsilon_{u+v} =  
e^{-\frac12 \|u\|^2- \langle u, v\rangle} e^{-\frac12 \|z\|^2- \langle z,u+ v\rangle}\varepsilon_{z+u+v} \\&= e^{-\frac12 \|z+u\|^2 - \langle z+u, v\rangle} e^{-i\im\ip zu}\varepsilon_{z+u+v} = e^{-i\im\ip zu} W_{z+u}\varepsilon_v\,,
\end{align*}
while
\begin{align*}
W_z W_u\varepsilon_v &=  e^{-i\im\ip zu} W_{z+u}\varepsilon_v= e^{-i\im\ip zu}e^{i\im\ip uz} W_u W_z \varepsilon_v \\&= e^{-2i\im\ip zu} W_u W_z \varepsilon_v\,.
\end{align*}
 
For $t\in\mathbb{R}$ and $z\in\ell_2(\mathbb{N})$, we have  
\begin{gather*}
W_{tz}W_{sz}=e^{-its\im\ip zz}W_{tz+sz} = W_{(t+s)z}\,,
\end{gather*} 
which proves that $t\mapsto W_{tz}$ is a unitary group for each fixed $z$ and the unitary group $(W_{tz})_{t\in\mathbb{R}}$ is strongly continuous. Let us denote by $p(z)$ the infinitesimal generator of $(W_{tz})_{t\in\mathbb{R}}$, so that
\begin{gather}\label{eq:Weyl-generator}
 W_{tz}=e^{-itp(z)}\,.
\end{gather}
The generator $p(z)$ is called the \emph{infinite-mode field} operator. For each $z\in\ell_2(\mathbb{N})$, $p(z)$ is unbounded and self-adjoint, with a densely defined domain, which contains $\mathcal{E}\subset\h$. Besides, the set of all finite particle vectors is a core for $p(z)$  \cite[Sect.\,20]{MR3012668}. Moreover, 
\begin{gather}\label{commutation-sigma}
[p(z), p(u)]\varepsilon_w= 2i\im \ip zu \varepsilon_w\,,\quad z,u, w\in\ell_2(\mathbb{N})\,,
\end{gather}
whence if $p(z)=0$ then $z=0$.

Let $\{\delta_j\}_{j \geq 1}$ be the canonical basis of $\ell_2(\mathbb{N})$ and define 
\begin{align}\label{eq:MP-CA}
\begin{split}
p_j&\colonequals 2^{-\frac{1}{2}}p(\delta_j)\,,\qquad\qquad q_j\colonequals -2^{-\frac{1}{2}}p(i \delta_j)\,, \\ 
a_j &\colonequals 2^{-\frac{1}{2}} (q_j + i p_j)\,,\qquad a_j^{\dagger}\colonequals 2^{-\frac{1}{2}}(q_j -i p_j)\,.
\end{split}
\end{align}
The operators $p_j,q_j$ are selfadjoint, while $a_j,a_j^\dagger$ are closed and satisfy $a_j^*=a_j^\dagger$. Besides, it readily follows from \eqref{commutation-sigma} that 
\begin{gather*}
[q_j, p_k] =  i \delta_{jk} I\,,\quad [p_j, p_k] = 0\,,\quad [q_j, q_k] =0\,,\quad (j,k\geq1)
\end{gather*}
which implies $[a_j, a_{k}^\dagger]= \delta_{jk}I$.  The elements of the families $\{a_j\}_{j\geq1}$ and $\{a_j^\dagger\}_{j\geq 1}$ commute among themselves. The operators $q_j, p_j$ are called the \emph{position} and \emph{momentum}, respectively, while $a_j, a_j^\dagger$ are the \emph{annihilation} and \emph{creation} operators, respectively. 

For $z=\sum_{j=1}^n(x_j+iy_j)\delta_i\in \ell_2(\mathbb N)$, with $\{x_j,y_j\}_{j=1}^n\subset \R$, one has
\begin{gather*}
e^{-itp(z)}=W_{tz}=e^{-it\sqrt{2}\sum_{j=1}^n(x_jp_j-y_jq_j)}\,,\qquad p(z)=\sqrt{2}\sum_{j=1}^n(x_jp_j-y_jq_j)\,.
\end{gather*}
Besides, for $z,u\in \ell_2(\N)$, the left-hand side of \eqref{infty-CCR-weyl} implies $e^{-it^2\im\ip zu}W_{tz + tu}= W_{tz} W_{t u}$, which by derivation at $t=0$ one obtains 
\begin{gather}\label{eq:linear-P-inZ}
p(z+u)=p(z)+p(u)\,.
\end{gather}

\section{Analytic approach to boson Gaussian states}\label{Analytic-Gaussian-States}
Let $L_{1}(\initsp)$ and $L_{2}(\initsp)$ be the Banach and Hilbert spaces of finite trace and Hilbert-Schmidt operators on $\initsp$, with norm $\no{\rho}_1=\tr{\abs{\rho}}$ and inner product $\ip{\rho}{\eta}_2 = \tr{\rho^* \eta}$, respectively.  
We remark that $A-\alpha$ means $A-\alpha I$, for any operator $A$ and $\alpha\in\C$.

\begin{definition}
The \emph{Wigner or non-commutative Fourier} transform of $\rho\in L_{1}\lrp{\initsp}$, is   
\begin{gather*}
\qF\rho(z)\colonequals\frac{1}{\sqrt{\pi}} \tr{\rho W_z}, \quad z\in\ell_2(\mathbb{N})
\end{gather*}
which is a complex-valued function on $\ell_2(\mathbb{N})$. 
\end{definition}

Consider the real Hilbert space $(\ell_2(\mathbb{N}),(\cdot,\cdot))$, where $(z, u)\colonequals\re\ip zu$, with canonical basis $\{\delta_j,-i\delta_j\}_{j\geq1}$, being $\{\delta_j\}_{j\geq1}$ the canonical basis of $\ell_2(\mathbb{N})$.

\begin{definition}\label{rm:properties-wS}
We call a state $\rho \in L_1(\initsp)$  \emph{analytic boson Gaussian} (or Gaussian for short), if there exist $w\in\ell_2(\mathbb{N})$ and a selfadjoint real operator $S\in{\mathcal B}_{\mathbb R}({\ell_2(\mathbb{N})})$, for which 
\begin{gather}\label{eq:properties-wS}
\qF\rho(z)=\frac{1}{\sqrt{\pi}}e^{-i(w,z)-\frac12 (z,Sz)}\,,\quad \forall z\in\ell_2(\mathbb{N})\,.
\end{gather}
In such a case, we write $\rho=\rho(w,S)$.
\end{definition} 

The elements $w $ and $S$ of a Gaussian state are called the mean value vector and the covariance operator, respectively. The applications  $z \mapsto (w, z)$ and $z \mapsto (z, Sz)$, determine a real linear functional and a real quadratic form, respectively, on the real Hilbert space $\ell_2(\mathbb{N})$. Thereby, the components $w, S$ are uniquely determined by the definition. 

\begin{remark}[Coherent channel]\label{rm:coherent-chanel} For a Gaussian state $\rho(w,S)$, it follows that 
\begin{gather}\label{eq:coherent-chanel}
T_u(\rho)=W_u\rho W_u^*\,,\quad u\in\ell_2(\N)
\end{gather}
is a Gaussian state with mean value vector $w-2iu$ and the covariance operator $S$. Indeed, $-2i\im\ip zu=-i(-2iu,z)$ and the right-hand side of \eqref{infty-CCR-weyl} implies 
\begin{align*}
\qF{T_u(\rho)}(z)=e^{-i(-2iu,z)}\qF{\rho}(z)=\frac{1}{\sqrt{\pi}}e^{-i(w-2iu,z)-\frac12 (z,Sz)}\,,
\end{align*}
as expected. The map \eqref{eq:coherent-chanel} is called the coherent channel (cf. \cite[Ex.\,3]{MR4516426}).
\end{remark}

We will follow the approach in \cite{MR4516426}, to define moments of an unbounded observable in a state $\rho$, using well-known properties of Yosida's approximations, for unbounded selfadjoint operators \cite[Sec.\,1.3]{MR710486}. Given $\epsilon>0$ and an infinitesimal generator $\Lambda$ of a strongly continuous semigroup of contractions $e^{-t\Lambda}$, with $t\geq0$, it follows that  $(I+\epsilon \Lambda)^{-1}$ is a contraction in $\mathcal B(\mathcal{H})$, and $\Lambda_{\epsilon}\colonequals \Lambda(I + \epsilon \Lambda)^{-1}\in\mathcal B(\mathcal{H})$. In addition, for $u\in \mathcal H$,
\begin{gather*}
 \lim_{\epsilon\to 0} \lrp{I+\epsilon \Lambda}^{-1}u=u\,,\quad \lim_{\epsilon\to 0} e^{-t\Lambda_{\epsilon}}u = e^{-t\Lambda}u\,,
\end{gather*}
while for $u\in \rm{dom \, } \Lambda$,
\begin{gather*}
(I+\epsilon \Lambda)^{-1}\Lambda u=\Lambda(I+\epsilon \Lambda)^{-1}u\,,\quad\lim_{\epsilon\to 0} \Lambda_{\epsilon}u = \Lambda u\,.
\end{gather*}
The operator $\Lambda_{\epsilon}$ is the infinitesimal generator of the uniformly continuous semigroup of contractions $e^{-t\Lambda_{\epsilon}}$. The above reasoning is satisfied for $\Lambda=\pm iA$, where $A$ is an observable, i.e., a selfadjoint operator. 

\begin{definition}\label{def-Weyl-moments} Given a state $\rho$ and $n\in\mathbb N$, the \emph{$n^{\rm th}$ moment} of an unbounded observable $A$ at $\rho$ (briefly the $\rho\,$-$n^{\rm th}$ moment of $A$) is defined by 
\begin{align}\label{moments-derivatives}
\wm{A^n}_\rho\colonequals (-i)^n \lim_{\epsilon\to 0}\tr{\rho\big((iA)_{\epsilon}\big)^n} = (-i)^n {\frac{d^n}{dt^n}\textrm{tr}\big(\rho e^{it A}\big)}\upharpoonleft_{t=0}\,,
\end{align}
whenever the limit exists \cite[Th.\,1]{MR4516426}. In the case when $A$ is bounded, $\tr{\rho A^n}$ is recovered. 
\end{definition}

For a Gaussian state $\rho(w,S)$, the $\rho(w,S)$-$n^{\rm th}$ moments of the field operator $p(z)$ are 
\begin{gather*}
\wm{p(z)^n}_\rho\colonequals (-i)^n \lim_{\epsilon\to 0}\tr{\rho\lrp{(ip(z))_{\epsilon}}^n} = (-i)^n {\frac{d^n}{dt^n}\tr{\rho e^{- it p(z)}}}\upharpoonleft_{t=0}\,,\quad z\in\ell_2(\mathbb{N})\,,
\end{gather*}
which are called the \emph{$\rho\,$-$n^{\rm th}$ Weyl moments}. 
\begin{remark}  All the $n^{\rm th}$-Weyl moments of $\rho(w,S)$ are finite. Indeed, from   \eqref{moments-derivatives} and \eqref{eq:properties-wS} it follows that 
\begin{gather*}
\wm{p(z)^n}_\rho= (-i)^n {\frac{d^n}{dt^n}e^{-it(w,z)-\frac{1}{2}t^2(z,Sz)}}\rE{t=0}\,.
\end{gather*} The above quality suggests that Gaussian states belong to a class of docile or tractable states in the sense that allows explicit measurements of significant physical quantities such as energy, position and moment.
\end{remark}
The characteristic function of the field operator $p(z)$ in a state $\rho$ is
\begin{align}\label{charact-Gaussian}
\varphi_z[\rho](t)\colonequals\mathcal F[\rho](tz)=\frac{1}{\sqrt{\pi}}\tr{\rho W_{tz}}\,, \quad t\in\mathbb R\,.\end{align}
Thus, for a  Gaussian state  $\rho(w,S)$ one has 
 $\varphi_z[\rho](t)=\pi^{-1/2}e^{-it(w,z)-\frac{1}{2}t^2 (z,Sz)}$ and  \begin{gather}\label{reverse-time}
 \varphi_z[\rho](-t)=\pi^{-1/2}e^{it(w,z)-\frac{1}{2}t^2 (z,Sz)}\,.
 \end{gather}
Thereby, \eqref{charact-Gaussian} and \eqref{reverse-time} imply $e^{-\frac12 t^2(z,Sz)}=\tr{\rho e^{it\lrp{p(z)-(w,z)}}}$ and by \eqref{moments-derivatives}, 
\begin{align*}
\wm{\lrp{p(z)-(w,z)}^n}_\rho= (-i)^n \frac{d^n}{dt^n}\tr{\rho e^{it\lrp{p(z)-(w,z)}}}\rE{t=0}= (-i)^n \frac{d^n}{dt^n}{e^{-\frac12 t^2(z,Sz)}}\rE{t=0}\,,
\end{align*}
which yields the following result.
\begin{corollary}\label{necessity} The $\rho(w,S)$-$n^{\rm th}$ Weyl moments
 satisfy the following recurrence relation  
\begin{align}\label{g-moments}
\wm{\lrp{p(z)-(w,z)}^n}_\rho=\begin{cases}
0\,,&\mbox{for $n$ odd}\\
(z,Sz)^\frac{n}{2}(n-1)!!\,,&\mbox{for $n$ even}
\end{cases}\,.
\end{align}
\end{corollary}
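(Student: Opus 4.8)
The plan is to compute the right-hand side directly, since all the analytic content has already been packaged into the identity recorded just above the statement, namely
\[
e^{-\frac12 t^2(z,Sz)}=\tr{\rho e^{it\lrp{p(z)-(w,z)}}}\,,
\]
together with the moment formula \eqref{moments-derivatives} applied to the observable $A=p(z)-(w,z)$. These already combine to give
\[
\wm{\lrp{p(z)-(w,z)}^n}_\rho=(-i)^n\frac{d^n}{dt^n}e^{-\frac12 t^2(z,Sz)}\rE{t=0}\,,
\]
so the whole task reduces to evaluating the $n$th derivative at the origin of the one-variable Gaussian $t\mapsto e^{-\frac12 a t^2}$, where I abbreviate $a\colonequals(z,Sz)\in\R$.

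First I would expand this function in its Maclaurin series. Since $e^{-\frac12 a t^2}=\sum_{k\geq0}\frac{(-a/2)^k}{k!}\,t^{2k}$ is entire in $t$, derivatives of every order exist and may be read off term by term; in particular only even powers of $t$ occur. Hence for odd $n$ the coefficient of $t^n$ vanishes, which gives $\wm{\lrp{p(z)-(w,z)}^n}_\rho=0$ and settles the first case.

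For even $n=2k$ I would match the Taylor coefficient: the $n$th derivative at $0$ equals $n!$ times the coefficient of $t^n$, so $\frac{d^{2k}}{dt^{2k}}e^{-\frac12 a t^2}\rE{t=0}=(2k)!\,\frac{(-1)^k a^k}{2^k k!}$. Substituting this into the moment formula and using $(-i)^{2k}=(-1)^k$, the two sign factors cancel and I am left with $\frac{(2k)!}{2^k k!}\,a^k$. The computation then closes by the elementary identity $\frac{(2k)!}{2^k k!}=(2k-1)!!$ (equivalently $(2k)!=(2k)!!\,(2k-1)!!$ with $(2k)!!=2^k k!$), which recovers exactly $(z,Sz)^{n/2}(n-1)!!$.

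I do not expect a genuine obstacle here: the finiteness of all the Weyl moments of $\rho(w,S)$ was already recorded in the preceding remark, and once the problem is reduced to the entire function $e^{-\frac12 a t^2}$ the interchange of differentiation and series summation is automatic. The only point demanding care is the bookkeeping of the sign factors $(-i)^{2k}$ and $(-1)^k$ together with the double-factorial identity, which I would verify as indicated above.
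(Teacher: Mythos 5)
Your proposal is correct and follows essentially the same route as the paper: the paper derives $\tr{\rho e^{it(p(z)-(w,z))}}=e^{-\frac12 t^2(z,Sz)}$ from \eqref{charact-Gaussian} and \eqref{reverse-time}, applies \eqref{moments-derivatives}, and reads off the derivatives of the one-variable Gaussian at $t=0$, exactly as you do. Your explicit Maclaurin-series bookkeeping (including the identity $(2k)!=2^k k!\,(2k-1)!!$) just spells out the derivative evaluation the paper leaves implicit.
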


\begin{remark} One can disentangle $e^{it(p(z)-(w,z))}$ to obtain, after taking derivatives,
\begin{gather*}
\wm{\lrp{p(z)-(w,z)}^n}_\rho= \sum_{k=0}^{n}(-1)^k \begin{pmatrix} n \\ k\end{pmatrix}\wm{p(z)^{n-k}}_\rho (w,z)^k\,.
\end{gather*}
The first two moments of $\{m_n(z)=\wm{p(z)^n}_\rho\}_{n\in\N}$ imply the other ones, viz. \eqref{g-moments} is explicitly written as follows 
\begin{align*}
m_n(z)=\,\unit_{2{\mathbb N}}(n)(m_2(z)-m_1^2(z))^{\frac{n}{2}}(n-1)!!+\sum_{k=1}^n(-1)^{k+1}\begin{pmatrix} n \\ k\end{pmatrix}m_{n-k}(z)m_1^k(z)\,,\quad n\geq 3\,,
\end{align*}
where $\unit_{2{\mathbb N}}$ denotes the indicator function of the set of even natural numbers.
\end{remark}
\begin{remark}\label{rm:centered-Gs}
Every Gaussian state can be supposed to be centered, viz. with mean value vector equal zero. Indeed, if $\rho(w,S)$ is a Gaussian state then $\hat\rho=W_{-iw/2}\rho W_{iw/2}$ is a centered Gaussian state, due to Remark~\ref{rm:coherent-chanel}.
\end{remark}

The converse assertion of Corollary~\ref{necessity} also holds in the following sense: regarding the \emph{moment-generating} function of an observable $A$ in a state $\rho$ (cf. \cite{MR4516426})
\begin{gather*}
g_{\rho,A}(x)=\sum_{n\geq0}\frac{1}{n!}\wm{A^n}_\rho x^n\,,\quad x\in\R\,,
\end{gather*}
one obtains the following result.

\begin{theorem}
For a given state $\rho$, if all the $\rho\,$-$n^{\rm th}$ moments of $p(z)$ are finite and satisfy \eqref{g-moments}, for some $w\in\ell_2(\N)$ and selfadjoint real operator $S\in\mathcal B(\ell_2(\N))$, then
\begin{gather}\label{eq:gf-p_z}
g_{\rho,p(z)}(x)=e^{(w,z)x+\frac12(z,S,z)x^2}\,,\quad x\in\R\,.
\end{gather}
Moreover, $\rho$ is Gaussian with mean value vector $w$ and covariance $S$.
\end{theorem}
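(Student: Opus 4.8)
The plan is to prove the two assertions in turn: first the closed form \eqref{eq:gf-p_z} for the moment-generating function, obtained by resummation, and then the Gaussian shape of $\qF\rho$, obtained by promoting this power-series identity to an identity for the characteristic function of $p(z)$.

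For the first assertion, set $c=(w,z)\in\R$ and note that, since $c$ is a scalar, $e^{itp(z)}=e^{itc}e^{it(p(z)-c)}$, whence $\tr{\rho e^{itp(z)}}=e^{itc}\tr{\rho e^{it(p(z)-c)}}$. Differentiating $n$ times at $t=0$ by the Leibniz rule and using Definition~\ref{def-Weyl-moments} yields the binomial identity
\[
\wm{p(z)^n}_\rho=\sum_{k=0}^n\binom{n}{k}(w,z)^{\,n-k}\,\wm{(p(z)-(w,z))^k}_\rho,
\]
which is licit because every moment appearing is finite by hypothesis (cf.\ the Remark following Corollary~\ref{necessity}). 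Substituting this into the definition of $g_{\rho,p(z)}$ and rearranging the resulting double sum as a Cauchy product, I would factor
\[
g_{\rho,p(z)}(x)=\Big(\sum_{j\geq0}\frac{((w,z)x)^j}{j!}\Big)\Big(\sum_{k\geq0}\frac{\wm{(p(z)-(w,z))^k}_\rho}{k!}\,x^k\Big).
\]
The first factor equals $e^{(w,z)x}$. In the second, the odd-order terms vanish by \eqref{g-moments}, while the even-order terms simplify through $(2m-1)!!/(2m)!=1/(2^m m!)$ to give $\sum_{m\geq0}\frac{1}{m!}\big(\tfrac12(z,Sz)x^2\big)^m=e^{\frac12(z,Sz)x^2}$. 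Both series are entire, so the Cauchy product is justified and \eqref{eq:gf-p_z} follows.

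For the second assertion, I would realize $p(z)$ through its spectral measure $E_z$ and put $\mu_z(\cdot)=\tr{\rho E_z(\cdot)}$, a probability measure on $\R$ with characteristic function $\hat\mu_z(t)=\tr{\rho e^{itp(z)}}$ and $n$-th moment $\wm{p(z)^n}_\rho$ (by Definition~\ref{def-Weyl-moments} this moment is $(-i)^n$ times the $n$-th derivative of $\hat\mu_z$ at $0$, hence agrees with $\int\lambda^n\,d\mu_z$). By the first part the moment-generating function $x\mapsto g_{\rho,p(z)}(x)=e^{(w,z)x+\frac12(z,Sz)x^2}$ of $\mu_z$ is entire; in particular it is finite in a neighbourhood of $0$, so $\mu_z$ has exponential moments of every order.

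The decisive step, which I expect to be the main obstacle, is moment determinacy: matching all moments with those of the normal law $\mathcal N\big((w,z),(z,Sz)\big)$ does not by itself force $\mu_z$ to be Gaussian, and this is precisely where the Gaussian growth of the moments must be used. Concretely, the finiteness of the moment-generating function near the origin (equivalently, Carleman's condition $\sum_{n\geq1}\big(\wm{p(z)^{2n}}_\rho\big)^{-1/(2n)}=\infty$, which holds because $\big((z,Sz)^n(2n-1)!!\big)^{1/(2n)}\sim\sqrt{2n/e}\,\sqrt{(z,Sz)}$) makes the Hamburger moment problem for $\mu_z$ determinate. Hence $\mu_z=\mathcal N\big((w,z),(z,Sz)\big)$, so that $\hat\mu_z(t)=e^{i(w,z)t-\frac12(z,Sz)t^2}=g_{\rho,p(z)}(it)$. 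Since $W_z=e^{-ip(z)}$, evaluating at $t=-1$ gives
\[
\qF\rho(z)=\frac{1}{\sqrt\pi}\tr{\rho W_z}=\frac{1}{\sqrt\pi}\,\hat\mu_z(-1)=\frac{1}{\sqrt\pi}\,e^{-i(w,z)-\frac12(z,Sz)}.
\]
As $z\in\ell_2(\N)$ is arbitrary, this is exactly \eqref{eq:properties-wS}, so $\rho=\rho(w,S)$ is Gaussian with mean value vector $w$ and covariance operator $S$.
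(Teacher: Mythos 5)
Your proof is correct, but it follows a genuinely different route from the paper's. For \eqref{eq:gf-p_z}, the paper first reduces to the centered case $w=0$ via the coherent channel (Remark~\ref{rm:centered-Gs}) and then sums the moment series in one step, whereas you keep the mean and obtain the factorization $e^{(w,z)x}e^{\frac12(z,Sz)x^2}$ as a Cauchy product; your variant has the small merit of not needing to check that the hypothesis \eqref{g-moments} is preserved under conjugation by $W_{-iw/2}$ (Remark~\ref{rm:centered-Gs} as stated concerns states already known to be Gaussian, so the paper's reduction tacitly uses translation covariance of the field operators, which you bypass). The real divergence is in the second assertion: the paper evaluates the moment series at $x=-i$ and resums the Taylor series of $t\mapsto\tr{\rho e^{itp(z)}}$, exchanging trace and series by uniform convergence (\cite[Th.\,7.17]{MR0385023}, following the proof of \cite[Th.\,1]{MR4516426}), so it never leaves the analytic-function framework; you instead pass to the spectral distribution $\mu_z(\cdot)=\tr{\rho E_z(\cdot)}$ and settle the matter with the classical Hamburger moment problem, verifying Carleman's condition from the Gaussian growth of the even moments, concluding $\mu_z=\mathcal N\lrp{(w,z),(z,Sz)}$ by determinacy, and then reading off $\tr{\rho W_z}=\hat\mu_z(-1)$. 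Both arguments hinge on the same analytic fact --- moments of Gaussian growth determine the law --- but the paper's version is shorter and self-contained, while yours makes explicit exactly where determinacy enters and delegates it to a standard theorem; the cost is that you must also identify the Yosida-approximation moments of Definition~\ref{def-Weyl-moments} with $\int\lambda^n\,d\mu_z$, an identification you assert only parenthetically and which is essentially the content the paper imports from \cite[Th.\,1]{MR4516426}.
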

\begin{proof}
By virtue of Remark~\ref{rm:centered-Gs}, we may suppose that $w=0$. Then, for any $x\in\C$,
\begin{align}\label{eq:series-moments}
\begin{split}
\sum_{n\geq0}\frac{1}{n!}\wm{p(z)^n}_\rho x^n&=\sum_{n\geq 0} \frac{1}{(2n)!} (z,Sz)^n (2n-1)!! x^{2n} \\=& \sum_{n\geq 0} \frac{1}{n!}\lrp{\frac{(z,Sz)}{2}}^n x^{2n}=e^{\frac{1}{2}(z,Sz)x^2}\,,
\end{split}
\end{align}
since $(2n-1)!!=(2n)!(2^nn!)^{-1}$, which implies \eqref{eq:gf-p_z}. Besides, by virtue of \eqref{eq:series-moments}, uniform convergence \cite[Th.\,7.17]{MR0385023} (cf. \cite[prove of Th.\,1]{MR4516426})  and \eqref{eq:Weyl-generator},
\begin{align*}
e^{-\frac{1}{2}(z,Sz)}&=\sum_{n\geq0}\frac{1}{n!}\wm{p(z)^n}_\rho(-i)^n=\sum_{n\geq0}\frac{(-1)^n}{n!}
 {\frac{d^n}{dt^n}\tr{\rho e^{it p(z)}}}\upharpoonleft_{t=0}\\
&=\tr{\rho\sum_{n\geq0}\frac{(-i)^n}{n!}p(z)^n}=\tr{\rho W_z}\,,
 \end{align*}
 as required.
\end{proof}

\subsection{Unbounded normal operators integrable with respect to a state}
 We introduce in this subsection the notion of an integrable normal operator with respect to a state. Here, we also provide necessary and sufficient conditions for a possible unbounded positive self-adjoint operator to be integrable with respect to a state $\rho$.

For any state $\rho$ and any bounded positive selfadjoint operator $A$, we have that 
\begin{gather*}
\tr{\rho A} = \tr{\rho^{\frac{1}{2}} A \rho^{\frac{1}{2}}} =\ip{A^{\frac{1}{2}}\rho^{\frac{1}{2}}}{ A^{\frac{1}{2}} \rho^{\frac{1}{2}}}_2 =\ip{ \rho^{\frac{1}{2}}}{A \rho^{\frac{1}{2}}}_2 \,,
\end{gather*}
i.e., any state can represented by unit vectors $\rho^{\frac{1}{2}}$ in the Hilbert space $L_{2}(\mathsf{h})$, where $\mathsf{h}$ is the space given in \eqref{eq:infinite-TP}. Besides, one can carry the observables and semigroups defined on $\mathsf{h}$ to corresponding observables and semigroups acting on the space $L_2(\mathsf{h})$. Actually, we follow \cite{MR363279} in considering a more general left multiplication operator $M_B$ given by 
\begin{gather*}
M_B \rho = B\rho\,,\quad B\in{\mathcal B}(\mathsf{h})
\end{gather*}
which is  bounded  on $L_{2}(\mathsf{h})$, with norm $\no{M_B}=\no{B}$. The isomorphism ${\mathcal V}$ from $L_2(\mathsf{h})$ onto $\mathsf{h}\otimes \mathsf{h}$ is defined by 
\begin{gather*}
\mathcal V\vk u\vb v=u \otimes \theta v\,,
\end{gather*}
which is extended by linearity and continuity to the whole space $\mathsf{h}$, where $\theta$ is any anti-unitary operator on $\mathsf{h}$ such that $\theta^2 =\unit$. One directly computes that
\begin{gather*}
\mathcal VM_B\mathcal V^{-1}u \otimes v= \mathcal V\vk{B u}\vb{\theta v} = \lrp{B\otimes \unit}u\otimes v\,.
\end{gather*} 
Thereby, it follows by linearity and density that 
\begin{gather*}
\mathcal VM_B\mathcal V^{-1}= B\otimes \unit\,.
\end{gather*}
Hence, we can identify $L_{2}(\mathsf{h})$ with $\mathsf{h} \otimes\mathsf{h}$ and consider the operator $B\mapsto B\otimes \unit$ instead of $M_B$. 

If $U(t)$ is a strongly continuous group of normal operators on $\mathsf{h}$, then the corresponding generator $A$ is normal, with associated spectral measure $(E_{\lambda} )_{\lambda\in{\mathbb R}}$ (cf. \cite[Chap.\,13]{MR1157815}) and \begin{gather}\label{eq:SUG-Ut}
U(t)= \int e^{t\lambda} d E_{\lambda}\,.
\end{gather}
The group $U(t)\otimes \unit$ and  $(E_{\lambda}\otimes \unit)$ have corresponding properties, for instance
\begin{gather*}
U(t)\otimes \unit= \int e^{t\lambda} d (E_{\lambda}\otimes \unit) \,.
\end{gather*}
Let $\lrp{{\mathbb U}(t)}_{t\in{\mathbb R}}$ and $\lrp{{\mathbb E}_{\lambda}}_{\lambda\in{\mathbb R}}$ be the corresponding group and spectral family on $L_2(\mathsf{h})$, and
\begin{gather*}
{\mathcal V}{\mathbb U}_{t}{\mathcal V}^{-1} = U(t)\otimes \unit\,,\quad\forall\, t\in{\mathbb R}\,.
\end{gather*}
Consider the corresponding normal generator ${\mathbb A}$ and the representations
\begin{gather}\label{spectral-L2}
{\mathbb A}=\int \lambda d{\mathbb E}_{\lambda}\,,\quad {\mathbb U}_{t}=\int e^{t\lambda} d{\mathbb E}_\lambda\,,
\end{gather} 
whence  $A$ and ${\mathbb A}$ are selfadjoint or positive selfadjoint simultaneously. This happens when \eqref{eq:SUG-Ut} is a strongly continuous group of normal operators on $\mathsf{h}$ and $\lambda=i\beta$, with $\beta\in\R$. The explicit action of ${\mathbb A}\colon\dom\mathbb A\subset  L_2(\mathsf{h})\to L_2(\mathsf{h})$ is (cf. \cite[Lem.\,2]{MR363279})
\begin{gather}\label{explicit-action}
{\mathbb A}\eta = A \eta, \qquad\mbox{for all } \, 
\eta \in {\rm dom \, }({\mathbb A})=\{\eta\in L_{2}(\mathsf{h}) \, : \, A \eta\in L_{2}(\mathsf{h})\}\,.
\end{gather}

It is well-known that any normal operator $\mathbb A$ is decomposed into a linear combination of four positive selfadjoint parts \cite[Chap.\,3]{MR919948}. Namely, by the spectral theorem, $\mathbb A=\int \lambda d\mathbb E_\lambda$ and  
\begin{gather*}
\mathbb A=\int (\re \lambda)_+ d\mathbb E_\lambda-\int (\re \lambda)_- d\mathbb E_\lambda+i\lrp{\int (\im \lambda)_+ d \mathbb E_\lambda-\int (\im \lambda)_- d\mathbb E_\lambda}\,,
\end{gather*}
getting four positive selfadjoint operators having commuting spectral projections
such that $\mathbb A=(\re \mathbb A)_+-(\re \mathbb A)_-+i((\im \mathbb A)_+-(\im \mathbb  A)_-)$.

\begin{definition}\label{def:A-traceable-rho}
 A not necessarily bounded normal operator $A$ is integrable with respect to a state $\rho$ ($\rho\,$-integrable for short) if $\lim_{\epsilon\to0}\tr{\rho[(\re A)_{\pm}]_\epsilon}, \lim_{\epsilon\to0}\tr{\rho[(\im A)_{\pm}]_\epsilon}$ exist. In such a case, we write  
\begin{gather}\label{eq:def-rhoA-traceable}
 \tr{\rho A}\colonequals\lim_{\epsilon\to 0}\tr{\rho\big([(\rm{Re\, } A)_+)]_\epsilon-[(\rm{Re\, } A)_-)]_\epsilon
+i[(\rm{Im\, } A)_+)]_\epsilon-i[(\rm{Im\, } A)_-)]_\epsilon\big)}\,.
\end{gather}
The usual trace is recovered in \eqref{eq:def-rhoA-traceable} when $A$ is bounded \cite[Rm.\,4]{MR4516426}.
\end{definition}
The following was proved in \cite{MR4516426}.
\begin{lemma}\label{lem-positive-traceable-rho}
A positive selfadjoint operator $A$ is integrable with respect to a state $\rho$ if and only if $A^{\frac12}\rho^{\frac12}\in L_2(\mathsf{h})$. 
In such a case, one has that
\begin{gather}\label{eq:trace-rho-postive}
\tr{\rho A}= \int\lambda d\ip{\rho^{\frac12}}{\mathbb E_\lambda\rho^{\frac12}}_2
=\ip{A^{\frac12}\rho^{\frac12}}{A^{\frac12}\rho^{\frac12}}_2\,,
\end{gather} being $\mathbb{E}_\lambda$ the spectral measure in \eqref{spectral-L2}. Besides, for $\rho = \sum_{k\in\mathbb N} \rho_{k} \vk{u_k}\vb{u_k}$, \eqref{eq:trace-rho-postive} turns into
\begin{gather}\label{eq:trace-rho-decomposition}
\tr{\rho A}=\sum_{k\in\mathbb N} \rho_k\no{A^{\frac{1}{2}}u_k}^2\,.
\end{gather}
\end{lemma}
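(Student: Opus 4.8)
The plan is to reduce the whole statement to the bounded Yosida approximations and then pass to the limit by monotone convergence. Since $A$ is positive selfadjoint we have $(\re A)_+=A$ and $(\re A)_-=(\im A)_\pm=0$, so Definition~\ref{def:A-traceable-rho} and \eqref{eq:def-rhoA-traceable} ask precisely for the existence of $\lim_{\epsilon\to0}\tr{\rho A_\epsilon}$, where $A_\epsilon\colonequals A(I+\epsilon A)^{-1}$ is the bounded positive Yosida approximation. Writing $A=\int\lambda\,dE_\lambda$ with spectrum in $[0,\infty)$, the functional calculus gives $A_\epsilon=\int\frac{\lambda}{1+\epsilon\lambda}\,dE_\lambda$, and since $\lambda\mapsto\frac{\lambda}{1+\epsilon\lambda}$ increases to $\lambda$ as $\epsilon\downarrow0$, the family $(\tr{\rho A_\epsilon})_{\epsilon>0}$ is nondecreasing as $\epsilon\downarrow0$. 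Thus the limit always exists in $[0,\infty]$, and the real content of the statement is the identification of this limit together with the finiteness criterion.

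First I would work at the bounded level. Because $A_\epsilon$ is bounded, positive and selfadjoint, the opening computation of this subsection applies verbatim with $A$ replaced by $A_\epsilon$ and yields
\[
\tr{\rho A_\epsilon}=\tr{\rho^{\frac12}A_\epsilon\rho^{\frac12}}=\ip{A_\epsilon^{\frac12}\rho^{\frac12}}{A_\epsilon^{\frac12}\rho^{\frac12}}_2=\no{A_\epsilon^{\frac12}\rho^{\frac12}}_2^2.
\]
Next, fixing a spectral decomposition $\rho=\sum_{k\in\N}\rho_k\vk{u_k}\vb{u_k}$ with $\{u_k\}$ orthonormal and completing it to an orthonormal basis of $\initsp$ by vectors in $\ker\rho$ (which carry no weight), the Hilbert--Schmidt norm expands as
\[
\no{A_\epsilon^{\frac12}\rho^{\frac12}}_2^2=\sum_{k\in\N}\rho_k\no{A_\epsilon^{\frac12}u_k}^2=\sum_{k\in\N}\rho_k\int\frac{\lambda}{1+\epsilon\lambda}\,d\ip{u_k}{E_\lambda u_k}.
\]

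Now comes the passage to the limit, which is the heart of the argument. For each fixed $k$ the integrand increases monotonically to $\lambda$, so by the monotone convergence theorem the inner integral converges to $\int\lambda\,d\ip{u_k}{E_\lambda u_k}=\no{A^{\frac12}u_k}^2\in[0,\infty]$; a second application of monotone convergence to the nonnegative series in $k$ then gives
\[
\lim_{\epsilon\to0}\tr{\rho A_\epsilon}=\sum_{k\in\N}\rho_k\no{A^{\frac12}u_k}^2\in[0,\infty].
\]
Hence the limit is finite if and only if $\sum_{k}\rho_k\no{A^{\frac12}u_k}^2<\infty$, which is exactly the statement that $\rho^{\frac12}$ maps into $\dom A^{\frac12}$ with $A^{\frac12}\rho^{\frac12}$ of finite Hilbert--Schmidt norm, i.e. $A^{\frac12}\rho^{\frac12}\in L_2(\initsp)$; this proves the equivalence and, together with the display above, the decomposition formula \eqref{eq:trace-rho-decomposition} and the second equality in \eqref{eq:trace-rho-postive}. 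I expect the main technical care to lie precisely in justifying these two interchanges of limits while allowing values in $[0,\infty]$, and relatedly in checking that the basis completion does not alter the Hilbert--Schmidt sum, since the $\ker\rho$ directions contribute $0$.

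Finally, for the spectral-integral form I would lift the computation to $L_2(\initsp)\cong\initsp\otimes\initsp$. By \eqref{spectral-L2} the normal generator $\mathbb A=\int\lambda\,d\mathbb E_\lambda$ has the explicit action \eqref{explicit-action}, so on $\rho^{\frac12}\in\dom\mathbb A^{\frac12}$ one has $\mathbb A^{\frac12}\rho^{\frac12}=A^{\frac12}\rho^{\frac12}$; consequently
\[
\no{A^{\frac12}\rho^{\frac12}}_2^2=\ip{\mathbb A^{\frac12}\rho^{\frac12}}{\mathbb A^{\frac12}\rho^{\frac12}}_2=\ip{\rho^{\frac12}}{\mathbb A\rho^{\frac12}}_2=\int\lambda\,d\ip{\rho^{\frac12}}{\mathbb E_\lambda\rho^{\frac12}}_2,
\]
which is the first equality in \eqref{eq:trace-rho-postive}. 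The only delicate point in this last step is the bookkeeping that identifies $\dom\mathbb A^{\frac12}$ with $\{\eta : A^{\frac12}\eta\in L_2(\initsp)\}$ through \eqref{explicit-action}, after which the functional calculus on $L_2(\initsp)$ closes the argument.
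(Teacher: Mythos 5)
First, a point of fact: the paper contains no proof of this lemma to compare against---it is imported from the authors' one-mode paper \cite{MR4516426} (``The following was proved in...''). Measured against the natural argument, which is the one the paper's own framework (Yosida approximations plus the Kraus--Schr\"oter correspondence) is set up for, your route is the right one and essentially everything you do is correct: for positive selfadjoint $A$ the other three parts in Definition~\ref{def:A-traceable-rho} vanish, so $\rho$-integrability is exactly the existence of $\lim_{\epsilon\to0}\tr{\rho A_\epsilon}$; the map $\lambda\mapsto\lambda/(1+\epsilon\lambda)$ increases to $\lambda$ as $\epsilon\downarrow0$, so that limit always exists in $[0,\infty]$; the bounded-level identity $\tr{\rho A_\epsilon}=\no{A_\epsilon^{\frac12}\rho^{\frac12}}_2^2$ and the two monotone-convergence passages correctly identify the limit with $\sum_k\rho_k\no{A^{\frac12}u_k}^2$; and \eqref{explicit-action} converts this into the spectral integral in \eqref{eq:trace-rho-postive}.

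Two repairs are needed, one substantive. The substantive one: you assert that finiteness of $\sum_k\rho_k\no{A^{\frac12}u_k}^2$ ``is exactly'' the statement $A^{\frac12}\rho^{\frac12}\in L_2(\mathsf{h})$. The nontrivial direction is that finiteness of the sum forces $\ran\rho^{\frac12}\subset\dom A^{\frac12}$: a general vector $\rho^{\frac12}v=\sum_k\rho_k^{\frac12}\ip{u_k}{v}u_k$ is an infinite combination of the $u_k$, and membership in the domain of an unbounded operator does not pass to infinite combinations for free. It does here because $A^{\frac12}$ is closed: by Cauchy--Schwarz,
\[
\no{\sum_{k=M}^{N}\rho_k^{\frac12}\ip{u_k}{v}A^{\frac12}u_k}\le\Big(\sum_{k=M}^{N}\abs{\ip{u_k}{v}}^2\Big)^{\frac12}\Big(\sum_{k=M}^{N}\rho_k\no{A^{\frac12}u_k}^2\Big)^{\frac12}\,,
\]
so the image partial sums are Cauchy, and closedness yields $\rho^{\frac12}v\in\dom A^{\frac12}$ with $A^{\frac12}\rho^{\frac12}$ everywhere defined, bounded, and of the Hilbert--Schmidt norm you computed. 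The cosmetic one: in your final chain the middle term $\ip{\rho^{\frac12}}{\mathbb A\rho^{\frac12}}_2$ presupposes $\rho^{\frac12}\in\dom\mathbb A$, which does not follow from $\rho^{\frac12}\in\dom\mathbb A^{\frac12}$ (the paper commits the same notational abuse in Remark~\ref{rm:normal-integrableO}); simply pass from $\no{\mathbb A^{\frac12}\rho^{\frac12}}_2^2$ to $\int\lambda\,d\ip{\rho^{\frac12}}{\mathbb E_\lambda\rho^{\frac12}}_2$ directly by the functional calculus for $\mathbb A^{\frac12}$. With these insertions the proof is complete.
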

\begin{remark}\label{rm:normal-integrableO} For a state $\rho$, we have the following:
\begin{itemize}
\item[(i)] A normal $\rho\,$-integrable operator $A$ holds
\begin{gather}\label{eq:normal-integrableO}
\tr{\rho A}=\int \lambda d\ip{\rho^{\frac12}}{\mathbb E_\lambda\rho^{\frac12}}_2=\ip{\rho^{\frac12}}{A\rho^{\frac12}}_2\,.
\end{gather}
Indeed, the positive parts $(\re A)_\pm, (\im A)_\pm$ of $A$ are $\rho\,$-integrable and satisfy \eqref{eq:trace-rho-postive}.
\item[(ii)] If $\alpha\in {\mathbb C}$ and $A, B$ are two normal $\rho\,$-integrable operators for which  $\alpha A + B$ is normal and $\rho\,$-integrable, then for the spectral measure $\mathbb E_\lambda$ of $\alpha A+B$, 
\begin{align}\label{eq:normal-integrableO-2}
\begin{split}
\tr{\rho (\alpha A+B)}&=\int \lambda d\ip{\rho^{\frac12}}{\mathbb E_\lambda\rho^{\frac12}}_2=\ip{\rho^{\frac12}}{\alpha A\rho^{\frac12}}_2 + \ip{\rho^{\frac12}}{B\rho^{\frac12}}_2 \\&= \alpha\tr{\rho A} + \tr{\rho B}\,.
\end{split}\end{align}
\end{itemize}
\end{remark}

\begin{theorem}\label{thm:integrable_norm} A normal operator $A$ is integrable with respect to a state $\rho$ if and only if 
\begin{gather}\label{eq:norm-rho-int}
\no{A}_\rho\colonequals \tr{\rho\abs{A}}=\int \abs{\lambda} d\ip{\rho^{\frac12}}{\mathbb E_\lambda\rho^{\frac12}}_2<\infty\,.
\end{gather}
Besides, if $A^n$ is $\rho\,$-integrable, for $n\geq0$, then so is $A^k$, for $k=0,\dots,n$.
\end{theorem}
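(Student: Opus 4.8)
The plan is to reduce everything to a single scalar spectral measure on $\C$ and then to combine Lemma~\ref{lem-positive-traceable-rho} with elementary pointwise inequalities between the functions $\lambda\mapsto\abs\lambda$, $\lambda\mapsto(\re\lambda)_\pm$ and $\lambda\mapsto(\im\lambda)_\pm$. Writing $\mathbb E$ for the $L_2(\mathsf h)$-spectral measure of the normal operator $A$ over $\C$, so that $A=\int\lambda\,d\mathbb E_\lambda$ as in \eqref{spectral-L2}, I would first introduce the finite positive measure $\mu(B)\colonequals\ip{\rho^{\frac12}}{\mathbb E(B)\rho^{\frac12}}_2$ on the Borel sets $B\subseteq\C$, whose total mass is $\ip{\rho^{\frac12}}{\rho^{\frac12}}_2=\tr\rho=1$. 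Because $A$ is normal, the four positive parts $(\re A)_\pm,(\im A)_\pm$ all lie in the commutative von Neumann algebra generated by $\mathbb E$, and their spectral measures are the pushforwards of $\mathbb E$ under the corresponding scalar functions; hence by Lemma~\ref{lem-positive-traceable-rho} each of them is $\rho\,$-integrable exactly when the respective integral $\int(\re\lambda)_\pm\,d\mu$, $\int(\im\lambda)_\pm\,d\mu$ is finite, in which case that integral equals its $\rho\,$-trace.

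For the equivalence in \eqref{eq:norm-rho-int} I would use the elementary chain
\[
\max\big(\abs{\re\lambda},\abs{\im\lambda}\big)\leq\abs\lambda\leq\abs{\re\lambda}+\abs{\im\lambda}=(\re\lambda)_++(\re\lambda)_-+(\im\lambda)_++(\im\lambda)_-\,,
\]
valid for every $\lambda\in\C$. Integrating against $\mu$, the right-hand inequality shows that finiteness of the four parts forces $\int\abs\lambda\,d\mu<\infty$, while the left-hand inequality gives the converse. Since by Definition~\ref{def:A-traceable-rho} the operator $A$ is $\rho\,$-integrable precisely when all four limits exist, i.e.\ when all four positive parts are $\rho\,$-integrable, this is exactly the statement that $\abs A$ is $\rho\,$-integrable with $\no A_\rho=\tr{\rho\abs A}=\int\abs\lambda\,d\mu<\infty$; the final equality is Lemma~\ref{lem-positive-traceable-rho} applied to the positive operator $\abs A=\int\abs\lambda\,d\mathbb E_\lambda$, whose spectral measure is the pushforward of $\mathbb E$ under $\lambda\mapsto\abs\lambda$.

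For the second assertion I would note that $A^k$ is again normal with $A^k=\int\lambda^k\,d\mathbb E_\lambda$, so its spectral measure is the pushforward of $\mathbb E$ under $\lambda\mapsto\lambda^k$; applying the first part to $A^k$ shows that $A^k$ is $\rho\,$-integrable if and only if $\int\abs{\lambda^k}\,d\mu=\int\abs\lambda^k\,d\mu<\infty$. The hypothesis on $A^n$ gives $\int\abs\lambda^n\,d\mu<\infty$, and since $\mu$ is finite of total mass $1$ and $t^k\leq1+t^n$ for all $t\geq0$ and $0\leq k\leq n$, we obtain $\int\abs\lambda^k\,d\mu\leq1+\int\abs\lambda^n\,d\mu<\infty$; the case $k=0$ is trivial since $A^0=I$ and $\tr\rho=1$. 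Hence $A^k$ is $\rho\,$-integrable for every $k=0,\dots,n$, which is precisely the containment $L^n(\mu)\subseteq L^k(\mu)$ on the finite measure space $(\C,\mu)$.

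The step I expect to require the most care is the reduction to the single scalar measure $\mu$: one must verify that the four positive parts of the normal operator $A$ genuinely share the joint spectral measure $\mathbb E$ (this is where normality, i.e.\ that $\re A$ and $\im A$ commute, is used), so that all the relevant quantities can be expressed as integrals against one and the same $\mu$, and that passing to $\abs A$ and to the powers $A^k$ alters $\mathbb E$ only by the appropriate pushforward. Once this measure-theoretic bookkeeping is settled, both parts are immediate consequences of the displayed pointwise inequalities on the finite measure $\mu$.
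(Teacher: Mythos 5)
Your proof is correct and takes essentially the same approach as the paper: both reduce the statement to integrals against the scalar measure $\mu(\cdot)=\ip{\rho^{\frac12}}{\mathbb E(\cdot)\rho^{\frac12}}_2$ and obtain the first equivalence from the pointwise comparison of $\abs{\lambda}$ with the four positive parts (the paper phrases this as $\no{A}_\rho\leq\abs{A}_\rho\leq 2\no{A}_\rho$). The only divergence is in the second claim, where the paper applies H\"older's inequality to the probability measure $\mu$, giving the sharper quantitative estimate $\no{A^k}_\rho\leq\no{A^n}_\rho^{\frac kn}$, whereas you use the elementary pointwise bound $t^k\leq 1+t^n$; both arguments are valid and rest on $\mu$ having total mass $1$.
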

\begin{proof}
It is simple to verify from \eqref{eq:trace-rho-postive} that the positive parts of $A$ are $\rho\,$-integrable if and only if 
$\abs{A}_\rho=\int \lrp{\abs{\re \lambda}+\abs{\im \lambda} } d\ip{\rho^{\frac12}}{\mathbb E_\lambda\rho^{\frac12}}_2<\infty$. Besides, $\no A_\rho\leq\abs A_\rho\leq 2\no A_\rho$, which implies the first part of the assertion.  Now, if $A^n$ is $\rho\,$-integrable then, since the measure $\ip{\rho^{\frac12}}{\mathbb E_\lambda(\C)\rho^{\frac12}}_2=1$,   the H\"older inequality implies for $k=1,\dots,n$ (the case $k=0$ is simple) 
\begin{gather*}
\no{A^k}_\rho=\lrp{\int \abs{\lambda}^k d\ip{\rho^{\frac12}}{\mathbb E_\lambda\rho^{\frac12}}_2}\leq \lrp{\int \abs{\lambda}^n d\ip{\rho^{\frac12}}{\mathbb E_\lambda\rho^{\frac 12}}_2}^{\frac kn}=\no{A^n}_\rho^{\frac kn}<\infty\,,
\end{gather*}
as required.
\end{proof}

In the last theorem, we used the fact that $A^n$ is normal when $A$ is normal. Actually, all the elements of the commutative unital $*$-algebra $\wm{A}$ generated by $A$ are normal.

\begin{corollary} For a  state $\rho$ and a normal operator $A$, it follows that $A^n$ is $\rho\,$-integrable for all $n\geq0$, if and only if the elements of $\wm{A}$ are $\rho\,$-integrable. In such a case, \eqref{eq:norm-rho-int} is a norm for $\wm{A}$ and
\begin{gather}\label{eq:linearity-integrable}
\tr{\rho(\alpha f(A)+\beta g(A))}=\alpha\tr{\rho f(A)}+\beta\tr{\rho g(A)}\,,\quad f(A),g(A)\in \langle A\rangle \,.
\end{gather}
\end{corollary}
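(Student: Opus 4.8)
The plan is to realize $\langle A\rangle$ concretely through the Borel functional calculus of the normal operator $A=\int\lambda\,dE_\lambda$: every element has the form $f(A)=\int q(\lambda)\,dE_\lambda$ for some polynomial $q$ in $\lambda$ and $\overline\lambda$, and each such $f(A)$ is again normal (as already recorded just before the statement). Writing $\mu$ for the probability measure $\mu(\cdot)=\ip{\rho^{\frac12}}{\mathbb E_{(\cdot)}\rho^{\frac12}}_2$ on $\C$ attached to the lifted spectral family of \eqref{spectral-L2}, the spectral mapping theorem together with Theorem~\ref{thm:integrable_norm} turns each $\rho$-norm into a scalar integral, namely $\no{f(A)}_\rho=\int\abs{q(\lambda)}\,d\mu(\lambda)$, and in particular $\no{A^n}_\rho=\int\abs\lambda^n\,d\mu(\lambda)$. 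This dictionary between operators in $\langle A\rangle$ and scalar functions on the spectrum is what makes the statement tractable, so I would establish it first. The forward implication of the equivalence is then immediate: if every element of $\langle A\rangle$ is $\rho$-integrable, then so is $A^n\in\langle A\rangle$ for each $n\geq0$.

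For the converse, assume $A^n$ is $\rho$-integrable for all $n\ge0$, equivalently $\no{A^n}_\rho<\infty$ for all $n$ by Theorem~\ref{thm:integrable_norm}. Given $f(A)=\int q(\lambda)\,dE_\lambda\in\langle A\rangle$, take $N$ the total degree of $q$ in $\lambda,\overline\lambda$ and $C$ the sum of the moduli of its coefficients, so that the elementary pointwise bound $\abs{q(\lambda)}\le C(1+\abs\lambda^N)$ holds. Integrating against the probability measure $\mu$ gives
\begin{gather*}
\no{f(A)}_\rho=\int\abs{q(\lambda)}\,d\mu(\lambda)\le C\int\lrp{1+\abs\lambda^N}\,d\mu=C\lrp{1+\no{A^N}_\rho}<\infty\,,
\end{gather*}
using $\mu(\C)=1$, so $f(A)$ is $\rho$-integrable by Theorem~\ref{thm:integrable_norm}. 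This settles the equivalence.

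It remains to check that $\no{\cdot}_\rho$ is a norm on $\langle A\rangle$ and that \eqref{eq:linearity-integrable} holds. Homogeneity and the triangle inequality are where commutativity pays off: for $f(A)=\int q\,dE_\lambda$ and $g(A)=\int r\,dE_\lambda$ one has $\abs{\alpha f(A)}=\abs\alpha\abs{f(A)}$ and, crucially, $\abs{f(A)+g(A)}=\int\abs{q+r}\,dE_\lambda$, so passing to scalar integrals reduces both properties to the pointwise facts $\abs{\alpha q}=\abs\alpha\abs q$ and $\abs{q+r}\le\abs q+\abs r$ together with monotonicity of the integral. The linearity \eqref{eq:linearity-integrable} I would read off \eqref{eq:normal-integrableO}: since $f(A)$, $g(A)$ and $\alpha f(A)+\beta g(A)$ are all normal and $\rho$-integrable, each trace equals the corresponding inner product $\ip{\rho^{\frac12}}{\,\cdot\,\rho^{\frac12}}_2$, and linearity of this inner product in its second argument closes the computation (alternatively one invokes \eqref{eq:normal-integrableO-2} twice).

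The one genuinely delicate point is positive-definiteness, which I expect to be the main obstacle. The scalar reduction only yields that $\no{f(A)}_\rho=0$ forces $q=0$ $\mu$-almost everywhere, whereas $f(A)=0$ means $q=0$ $E$-almost everywhere on $\sigma(A)$; these coincide exactly when $\mu$ and $E$ share the same null sets. Using the decomposition \eqref{eq:trace-rho-decomposition}, the vanishing $\no{f(A)}_\rho=\sum_k\rho_k\no{\abs{f(A)}^{\frac12}u_k}^2=0$ forces $f(A)u_k=0$ for every $k$, i.e. $f(A)$ vanishes on the support of $\rho$; definiteness therefore holds precisely when $\rho$ is faithful, and this is the hypothesis (implicit or explicit) under which $\no{\cdot}_\rho$ is a genuine norm rather than merely a seminorm on $\langle A\rangle$.
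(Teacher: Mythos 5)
Your proposal is correct and follows the same core route as the paper's proof: reduce everything to scalar integrals against the measure $\mu(\cdot)=\ip{\rho^{\frac12}}{\mathbb E_{(\cdot)}\rho^{\frac12}}_2$ via Theorem~\ref{thm:integrable_norm}, obtain the triangle inequality pointwise on the spectrum (the paper's display \eqref{eq:triangle-inequality} is exactly your scalar reduction), and read off \eqref{eq:linearity-integrable} from the spectral representation \eqref{eq:normal-integrableO}. Two differences are worth recording. First, for the direction ``all $A^n$ $\rho\,$-integrable $\Rightarrow$ all of $\wm A$ $\rho\,$-integrable'', the paper only treats the monomials $A^{*n}$ together with closure under linear combinations, which literally leaves out the mixed monomials $A^jA^{*k}$; your uniform bound $\abs{q(\lambda)}\leq C(1+\abs{\lambda}^N)$ handles every element of $\wm A$ in one stroke, whereas the paper would need the extra one-liner $\no{A^jA^{*k}}_\rho=\int\abs{\lambda}^{j+k}d\mu=\no{A^{j+k}}_\rho$. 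Second, your flag on positive-definiteness is a genuine catch rather than excess caution: the paper disposes of the norm axioms by citing \eqref{eq:trace-rho-decomposition} and \eqref{eq:triangle-inequality}, but \eqref{eq:trace-rho-decomposition} only yields $f(A)u_k=0$ for those $k$ with $\rho_k>0$, so without faithfulness $\no{\cdot}_\rho$ is merely a seminorm on $\wm A$. Concretely, for $\rho=\vk u\vb u$ and $A=\vk v\vb v$ with $v\perp u$, every power of $A$ is $\rho\,$-integrable yet $\no A_\rho=0$ while $A\neq0$. Thus the ``norm'' assertion of the corollary tacitly requires $\rho$ faithful (for a fixed $A$ the exact condition is your null-set statement relating $\mu$ and the spectral measure of $A$; faithfulness is what makes it hold for all normal $A$ simultaneously), a hypothesis the paper never states; on this point your writeup is the more accurate of the two.
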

\begin{proof}
If $A^n$ is $\rho\,$-integrable for all $n\geq0$, then so is $A^{*n}$, since $\tr{\rho A^{*n}}=\int{\cc\lambda}^n d\ip{\rho^{\frac12}}{\mathbb E_\lambda\rho^{\frac12}}_2$. In addition, if $f(A),g(A)\in\langle A \rangle$ are $\rho\,$-integrable, then Theorem~\ref{thm:integrable_norm} implies
\begin{align}\label{eq:triangle-inequality}
\begin{split}
\no{\alpha f(A)+\beta g(A)}_\rho&=\int \abs{\alpha f(\lambda)+\beta g(\lambda)} d\ip{\rho^{\frac12}}{\mathbb E_\lambda\rho^{\frac12}}_2\\
&\leq \abs{\alpha}\int \abs{f(\lambda)}d\ip{\rho^{\frac12}}{\mathbb E_\lambda\rho^{\frac12}}_2+\abs{\beta}\int \abs{g(\lambda)}d\ip{\rho^{\frac12}}{\mathbb E_\lambda\rho^{\frac12}}_2\\&=\abs\alpha\no{f(A)}_\rho+\abs\beta\no{g(A)}_\rho<\infty\,,
\end{split}
\end{align}
i.e., $\alpha f(A)+\beta g(A)$ is $\rho\,$-integrable. The converse is straightforward. The properties for which  $\no\cdot_\rho$  is a norm follow by \eqref{eq:trace-rho-decomposition} and \eqref{eq:triangle-inequality}. Besides, \eqref{eq:linearity-integrable} is a consequence of the spectral representation \eqref{eq:normal-integrableO}.
\end{proof}

The proof of the next assertion follows the same lines as the proof of  \cite[Th.\,2]{MR4516426}.

\begin{theorem}\label{thm:An-traceable-derivates}
Let $\rho$ be a state, $A$ be a normal operator and $n\geq0$. If $A^n$ is $\rho\,$-integrable then
\begin{gather}\label{eq:An-integrable-derivates}
\tr{\rho A^n}=\frac{d^n}{dt^n}\tr{\rho e^{tA}}\rE{t=0}\,.
\end{gather}
\end{theorem}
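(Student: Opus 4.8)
The plan is to collapse the operator identity to a scalar statement about a probability measure on $\C$ and then differentiate under the integral sign. First I would exploit the spectral apparatus already set up: since $A$ is normal with lifted spectral family $\mathbb E_\lambda$ on $L_2(\mathsf h)$ and $\rho^{\frac12}$ is a unit vector of $L_2(\mathsf h)$, the assignment $\Delta\mapsto\ip{\rho^{\frac12}}{\mathbb E(\Delta)\rho^{\frac12}}_2$ is a Borel probability measure $\mu$ on $\C$, its total mass being $\ip{\rho^{\frac12}}{\rho^{\frac12}}_2=\tr{\rho}=1$. Setting $m_k=\int\lambda^k\,d\ip{\rho^{\frac12}}{\mathbb E_\lambda\rho^{\frac12}}_2$, Remark~\ref{rm:normal-integrableO} and Lemma~\ref{lem-positive-traceable-rho} give $\tr{\rho A^k}=m_k$, while the hypothesis that $A^n$ is $\rho\,$-integrable together with Theorem~\ref{thm:integrable_norm} yields $\int\abs{\lambda}^n\,d\mu<\infty$ and the $\rho\,$-integrability of every $A^k$, $0\le k\le n$. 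Applying the same calculus to the normal function $\lambda\mapsto e^{t\lambda}$ identifies
\[
\Phi(t)\colonequals\tr{\rho e^{tA}}=\int e^{t\lambda}\,d\mu(\lambda)\,,
\]
for $t$ near $0$ where $e^{tA}$ is $\rho\,$-integrable (automatic in the skew-adjoint case of the field operators $p(z)$, where $\mu$ lives on the imaginary axis and $\abs{e^{t\lambda}}\equiv1$).

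With this reduction the claim becomes the classical fact that the moment transform of a measure with finite $n^{\rm th}$ absolute moment is $n$-times differentiable at the origin, with $\Phi^{(n)}(0)=m_n$. I would prove by induction on $k=0,\dots,n$ that $\Phi$ is $k$-times differentiable near $0$ with $\Phi^{(k)}(t)=\int\lambda^k e^{t\lambda}\,d\mu(\lambda)$, the base case $k=0$ being the display above. For the inductive step I would form the difference quotient of $\Phi^{(k)}$, use the integral mean value identity
\[
\frac{e^{(t+h)\lambda}-e^{t\lambda}}{h}=\lambda e^{t\lambda}\int_0^1 e^{sh\lambda}\,ds\,,
\]
and pass $h\to0$ inside the integral by dominated convergence, obtaining $\Phi^{(k+1)}(t)=\int\lambda^{k+1}e^{t\lambda}\,d\mu$. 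Evaluating the top derivative at $t=0$ then gives $\Phi^{(n)}(0)=\int\lambda^n\,d\mu=m_n=\tr{\rho A^n}$, which is \eqref{eq:An-integrable-derivates}.

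The hard part is the dominated-convergence estimate licensing the interchange of the limit (and, at each stage, the derivative) with the integral. The integrand of the $k^{\rm th}$ difference quotient is bounded in modulus by $\abs{\lambda}^{k+1}e^{t\re\lambda}e^{\abs h\,\abs{\re\lambda}}$; since $k+1\le n$, the polynomial factor $\abs{\lambda}^{k+1}$ is controlled by $1+\abs{\lambda}^n$, which is $\mu\,$-integrable by Theorem~\ref{thm:integrable_norm}. The exponential factors are the only genuine obstruction: they equal $1$ when $\sigma(A)\subset i\R$, the case needed for the Weyl moments with $A=ip(z)$, and in general they are controlled uniformly for $t,h$ in a fixed small interval precisely by the $\rho\,$-integrability of $e^{sA}$ there. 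Once a single $\mu\,$-integrable dominator is fixed on such an interval, dominated convergence applies, the induction closes, and continuity of each $\Phi^{(k)}$ follows from the same bound, so the derivatives exist in the genuine sense rather than merely as Taylor coefficients. Everything else—the spectral identification of $\tr{\rho A^k}$ and of $\Phi$, and the finiteness of the moments—is supplied by Remark~\ref{rm:normal-integrableO}, Lemma~\ref{lem-positive-traceable-rho} and Theorem~\ref{thm:integrable_norm}.
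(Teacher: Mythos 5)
Your proposal is correct and takes essentially the same route as the paper, whose proof is deferred to the one-mode argument of \cite[Th.\,2]{MR4516426}: collapse both sides to integrals against the scalar spectral probability measure $d\ip{\rho^{\frac12}}{\mathbb E_\lambda\rho^{\frac12}}_2$ via Lemma~\ref{lem-positive-traceable-rho} and Remark~\ref{rm:normal-integrableO}, and then differentiate under the integral sign by induction and dominated convergence, with the finite $n^{\rm th}$ absolute moment from Theorem~\ref{thm:integrable_norm} supplying the dominating function. Your caveat that for a general normal $A$ one must also have $e^{sA}$ $\rho\,$-integrable for $s$ near $0$ to make the right-hand side of \eqref{eq:An-integrable-derivates} meaningful is well taken (the statement is silent on this); it is automatic in the paper's intended application $A=ip(z)$ of Remark~\ref{rm:traceable-moments}, where the spectrum lies on $i\mathbb R$ and $e^{sA}$ is unitary, which is exactly the case your domination argument covers without extra hypotheses.
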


\begin{remark}\label{rm:traceable-moments} For $n\geq 0$ and a Gaussian state $\rho$ such that $p(z)^n$ is $\rho\,$-integrable, it follows by \eqref{eq:An-integrable-derivates} (after replacing $A$ by $ip(z)$) that the $\rho\,$-$n^{\rm th}$ Weyl moment satisfies
\begin{gather*}
\wm{p(z)^n}_\rho =\tr{\rho p(z)^n}\,.
\end{gather*}
\end{remark}

\section{On the covariance operator}\label{Covariance-operator}

 Recall that the real Hilbert space $(\ell_2(\mathbb{N}),(\cdot,\cdot))$ has $\{\delta_j,e_j\colonequals -i\delta_j\}_{j\geq1}$ as canonical basis, where $\{\delta_j\}_{j\geq1}$ is the canonical basis of $(\ell_2(\mathbb{N}),\ip\cdot\cdot)$. Besides, $(z, u)\colonequals\re\ip zu$. Thus, it is decomposable into two real subspaces by
 \begin{gather*}
\ell_2(\mathbb{N})=\mathcal H_\delta\oplus\mathcal H_e\,,\quad\mbox{where}\quad\mathcal H_\delta\colonequals\cc{\Span\lrb{\delta_j}}_{j\geq1}\,,\quad \mathcal H_e\colonequals\cc{\Span\lrb{e_j}}_{j\geq1}\,.
\end{gather*}

 If  $\rho(w,S)$ is a Gaussian state then the mean value vector can be decomposed into $w= \sqrt{2}(l+m)$, where $l\in\mathcal H_\delta$ and $m\in\mathcal H_e$ are called the mean \emph{momentum} and \emph{position} vectors, respectively. Besides, the covariance operator follows the matrix representation $S=[(u_r,Sv_s)]_{r,s\geq1}$, where $u_r,v_s\in\{\delta_j,e_j\}_{j\geq1}$.

Recall that $p(z)p(u)$ is a normal operator, this is a consequence of  \eqref{commutation-sigma}.

\begin{definition}
We call a state $\rho$ \emph{amenable} if  $p(z)p(u)$ is $\rho\,$-integrable, for all $z,u\in \ell_2(\mathbb{N})$.
\end{definition}

\begin{remark} For an amenable Gaussian state $\rho$, it follows that $p(z)^2$ is $\rho\,$-integrable as well as $p(z)$ by Theorem~\ref{thm:integrable_norm}. Thereby,  Remark~\ref{rm:traceable-moments} asserts that  \begin{gather}\label{eq:equivalence-MI}
\wm{p(z)^n}_\rho =\tr{\rho p(z)^n}\,,\quad n=1,2\,,\quad z\in\ell_2(\N)\,.
\end{gather}
Thus, we will use freely both expressions in \eqref{eq:equivalence-MI} for amenable Gaussian states.
\end{remark}

\begin{proposition}\label{w-S-formulae} For amenable Gaussian state $\rho(w, S)$ the following hold:
\begin{enumerate}[(i)]
\item\label{it1:vMcM} The mean value vector satisfies 
\begin{gather}\label{eq:vMcM}
w=\sqrt{2}\sum_{j\geq1}\lrp{\wm{p_j}_\rho\delta_j+\wm{q_j}_\rho e_j}\,.
\end{gather}
Hence, its mean momentum and position vectors are given from \eqref{eq:equivalence-MI} by
\begin{gather*}
l=\sum_{j\geq1}\tr{\rho p_j}\delta_j\quad\mbox{and}\quad m=\sum_{j\geq1}\tr{\rho q_j}e_j\,,\quad\mbox{respectively}\,.
\end{gather*}

\item\label{it2:vMcM} It follows for any $z,u\in\ell_2(\N)$ that 
\begin{gather}\label{eq:covariance-entries}
\displaystyle (z,Su)=\re\tr{\rho p(z)p(u)}-\wm{p(z)}_\rho\wm{p(u)}_\rho\,.
\end{gather}
Therefore, for $j,k\geq1$, the entries of the covariance operator holds
\begin{align*}
(\delta_j,S\delta_k)&=2\lrp{\re \tr{\rho p_jp_k}-\tr{\rho p_j}\tr{\rho p_k}}\,,\\
(e_j,Se_k)&=2\lrp{\re \tr{\rho q_jq_k}-\tr{\rho q_j}\tr{\rho q_k}}\,,\\
(\delta_j,Se_k)&=2\lrp{\re\tr{\rho p_jq_k}-\tr{\rho p_j}\tr{\rho q_k}}\,,
\end{align*}
being $p_j,q_j$ the operators given in \eqref{eq:MP-CA}. In particular, 
\begin{gather*}
(\delta_j,S\delta_j)=2\lrp{\tr{\rho p_j^2}-\lrp{\tr{\rho p_j}}^2}\,;\quad 
 (e_j,Se_j)=2\lrp{\tr{\rho q_j^2}-\lrp{\tr{\rho q_j}}^2}\,.
\end{gather*}
\end{enumerate}
\end{proposition}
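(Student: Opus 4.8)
The plan is to read both formulae off the moment recurrence \eqref{g-moments}, exploiting that $\{\delta_j,e_j\}_{j\geq1}$ is an orthonormal basis of the real Hilbert space $(\ell_2(\mathbb N),(\cdot,\cdot))$ and that $p$ is additive in its argument by \eqref{eq:linear-P-inZ}. Throughout, amenability is what lets me pass between $\rho\,$-moments and traces and distribute the trace linearly.

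For part (\ref{it1:vMcM}), I would start from the $n=1$ instance of \eqref{g-moments}, which gives $\wm{p(z)}_\rho=(w,z)$ for every $z$. Since $\{\delta_j,e_j\}$ is orthonormal, $w=\sum_{j\geq1}\lrp{(w,\delta_j)\delta_j+(w,e_j)e_j}$, so it suffices to evaluate the coefficients. Using $p(\delta_j)=\sqrt2\,p_j$ and $p(e_j)=p(-i\delta_j)=\sqrt2\,q_j$ from \eqref{eq:MP-CA}, together with the scaling $\wm{cA}_\rho=c\wm{A}_\rho$ (immediate from the derivative form \eqref{moments-derivatives}), I obtain $(w,\delta_j)=\sqrt2\,\wm{p_j}_\rho$ and $(w,e_j)=\sqrt2\,\wm{q_j}_\rho$, which is \eqref{eq:vMcM}. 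The momentum/position decomposition $w=\sqrt2(l+m)$ then follows by reading off the $\mathcal H_\delta$- and $\mathcal H_e$-components and replacing moments by traces via \eqref{eq:equivalence-MI}.

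For part (\ref{it2:vMcM}), I would first settle the diagonal case. Expanding $(p(z)-(w,z))^2$ and inserting $\wm{p(z)}_\rho=(w,z)$ together with the $n=2$ instance of \eqref{g-moments} yields $(z,Sz)=\wm{p(z)^2}_\rho-\wm{p(z)}_\rho^2=\tr{\rho p(z)^2}-\wm{p(z)}_\rho^2$, the last equality by \eqref{eq:equivalence-MI}. To reach the off-diagonal form I polarize the symmetric form, $(z,Su)=\tfrac12\lrp{(z+u,S(z+u))-(z,Sz)-(u,Su)}$, replace $p(z+u)$ by $p(z)+p(u)$ using \eqref{eq:linear-P-inZ}, and distribute the trace linearly (Remark~\ref{rm:normal-integrableO}(ii), applicable since $\rho$ is amenable). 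After the diagonal squares cancel, the surviving terms are exactly the cross terms $\tr{\rho p(z)p(u)}+\tr{\rho p(u)p(z)}$ and $-2\wm{p(z)}_\rho\wm{p(u)}_\rho$.

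The key identity, and the step I expect to be the main obstacle, is $\tr{\rho p(u)p(z)}=\cc{\tr{\rho p(z)p(u)}}$, which collapses the two cross terms into $2\re\tr{\rho p(z)p(u)}$ and produces \eqref{eq:covariance-entries}. I would justify it through the spectral representation \eqref{eq:normal-integrableO}: writing $\tr{\rho p(z)p(u)}=\ip{\rho^{\frac12}}{p(z)p(u)\rho^{\frac12}}_2$, conjugating the Hilbert--Schmidt inner product, and using $(p(z)p(u))^*=p(u)p(z)$ since both factors are selfadjoint. What makes this delicate is precisely the unbounded, non-commuting nature of the products, so one must check that amenability guarantees both $p(z)p(u)$ and its adjoint $p(u)p(z)$ are $\rho\,$-integrable normal operators before invoking \eqref{eq:normal-integrableO}; notably the commutator \eqref{commutation-sigma} is not needed, only that conjugation interchanges the factors. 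Finally, specializing $z,u$ to $\delta_j,e_j$ and again using $p(\delta_j)=\sqrt2\,p_j$, $p(e_j)=\sqrt2\,q_j$ gives the three entry formulae (the scalar $2$ coming from the two factors of $\sqrt2$), and the diagonal specializations follow because $\tr{\rho p_j^2}$ and $\tr{\rho q_j^2}$ are real, so the real part is superfluous there.
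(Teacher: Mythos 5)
Your proposal is correct and follows the paper's strategy almost step for step: part (\ref{it1:vMcM}) is identical (the $n=1$ case of \eqref{g-moments} plus the normalizations in \eqref{eq:MP-CA}), and part (\ref{it2:vMcM}) uses the same polarization of $(z+u,S(z+u))$ via \eqref{eq:linear-P-inZ}. The one place you diverge is the cross-term step, which you correctly identify as the crux. The paper handles it by writing $\wm{(p(z)+p(u))^2}_\rho=\no{(p(z)+p(u))\rho^{1/2}}_2^2$ via Lemma~\ref{lem-positive-traceable-rho}, and then expanding the \emph{norm of a sum of vectors} in $L_2(\mathsf{h})$; the term $2\re\ip{p(z)\rho^{1/2}}{p(u)\rho^{1/2}}_2=2\re\tr{\rho p(z)p(u)}$ then appears automatically, so neither the conjugation identity $\tr{\rho p(u)p(z)}=\cc{\tr{\rho p(z)p(u)}}$ nor any distribution of the trace over a four-term operator sum is ever needed. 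Your route instead expands $(p(z)+p(u))^2$ into four unbounded summands and invokes Remark~\ref{rm:normal-integrableO}(ii); this works, but note that the remark is a two-term statement, so you must group the expansion carefully, e.g.\ as $p(z)p(z+u)+p(u)p(z+u)$ with each factor again of the form $p(a)p(b)$ (hence normal and $\rho\,$-integrable by amenability and \eqref{eq:linear-P-inZ}), since the intermediate sums in a naive grouping (such as the anticommutator $p(z)p(u)+p(u)p(z)$) are not directly covered by the amenability hypothesis. One small inaccuracy: you assert that \eqref{commutation-sigma} is not needed, but it is precisely what makes $p(z)p(u)$ and $p(u)p(z)$ \emph{normal} (the paper says so explicitly just before the definition of amenability), and normality is a prerequisite for invoking \eqref{eq:normal-integrableO} and for identifying $(p(z)p(u))^*$ with $p(u)p(z)$ (maximality of normal operators). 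In short: your argument is sound and essentially equivalent, but the paper's vector-norm expansion is the lighter mechanism, trading your operator-algebraic bookkeeping for Hilbert-space geometry in $L_2(\mathsf{h})$.
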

\begin{proof}
It follows from \eqref{g-moments}, for $n=1$, that $(w,z)=\wm{p(z)}_\rho$. In this fashion, if $w=\sum_{j\geq1}a_j\delta_j+b_je_j$, with $a_j,b_j\in\R$, and  $z\in\{\delta_j,e_j\}_{j\geq1}$, then by \eqref{eq:MP-CA},
\begin{gather*}
a_j=\sqrt{2}(w,\delta_j/\sqrt{2})=\sqrt{2}\wm{p_j}_\rho\,,\quad
b_j=\sqrt{2}(w,e_j/\sqrt{2})=\sqrt{2}\wm{q_j}_\rho\,,
\end{gather*}
whence it follows \eqref{eq:vMcM}. Now, one computes \eqref{eq:linearity-integrable} and \eqref{g-moments} that $(w,z)=\wm{p(z)}_\rho$ and $(z,Sz)=\wm{p(z)^2}_\rho-\wm{p(z)}_\rho^2$. Then, one gets for $z,u\in\ell_2(\N)$ that
\begin{align}\label{eq:dist-uHvU}
\begin{split}
\lrp{z+u,S(z+u)}&=2(z,Su)+(z,Sz)+(u,Su)\\
&=2(z,Su)+\wm{p(z)^2}_\rho-\wm{p(z)}_\rho^2+\wm{p(u)^2}_\rho-\wm{p(u)}_\rho^2\,.
\end{split}
\end{align}
Besides, one has by \eqref{eq:linear-P-inZ} that
\begin{align*}
\lrp{z+u,S(z+u)}&=\wm{p(z+u)^2}_\rho-\wm{p(z+u)}_\rho^2=\wm{(p(z)+p(u))^2}_\rho-(w,z+u)^2
\\&=\wm{(p(z)+p(u))^2}_\rho-\wm{p(z)}_\rho^2-\wm{p(u)}_\rho^2-2\wm{p(z)}_\rho\wm{p(u)}_\rho\,,
\end{align*}
which comparing with  \eqref{eq:dist-uHvU}, 
\begin{gather}\label{eq:aux1-wm}
2(z,Su)=\wm{(p(z)+p(u))^2}_\rho-\wm{p(z)^2}_\rho-\wm{p(v)^2}_\rho-2\wm{p(z)}_\rho\wm{p(u)}_\rho\,.
\end{gather}
On the other hand, due to \eqref{eq:trace-rho-postive}, \eqref{eq:normal-integrableO} and since $\rho$ is amenable, 
\begin{align*}
\wm{(p(z)+p(u))^2}_\rho&=\no{(p(z)+p(u))\rho^{1/2}}_2^2=
\wm{p(z)^2}_\rho+\wm{p(u)^2}_\rho+2\re\ip{p(z)\rho^{1/2}}{p(u)\rho^{1/2}}_2\\
&=\wm{p(z)^2}_\rho+\wm{p(u)^2}_\rho+2\re\tr{\rho p(z)p(u)}\,,
\end{align*}
which replacing in \eqref{eq:aux1-wm} one arrives at \eqref{eq:covariance-entries}.
\end{proof}
Let us denote the \emph{variance} of an observable $A$ in the state $\rho$ by 
\begin{gather*}
V_\rho(A)^2\colonequals\tr{\rho A^2} - \tr{\rho A}^2\,,
\end{gather*}
whenever it makes sense. 
\begin{remark}
For an amenable Gaussian state $\rho(w,S)$, it follows from \eqref{eq:linearity-integrable} and \eqref{eq:trace-rho-postive} that
\begin{gather}\label{eq:variance-norm2}
V_{\rho}\lrp{p(z)}^2=\no{\lrp{p(z)-\wm{p(z)}_\rho } \rho^{\frac{1}{2}}}_{2}^{2} \,, \quad z\in\ell_2(\mathbb{N})\,.
\end{gather} 
\end{remark}

The following inequality yields the so-called \emph{Heisenberg's uncertainty principle}. 

\begin{theorem} For an amenable Gaussian state $\rho(w,S)$, the following holds for $z,u\in\ell_2(\C)$:
\begin{gather}\label{eq:uncertainty-principle}
V_{\rho}\lrp{p(z)}^2V_{\rho}\lrp{p(u)}^2\geq (z,Su)^2+\abs{\im\ip zu}^2\,.
\end{gather}
Hence, the uncertainty principle $V_{\rho}\lrp{p(z)}^2V_{\rho}\lrp{p(u)}^2\geq \abs{\im\ip zu}^2$ holds true.
\end{theorem}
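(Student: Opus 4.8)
The plan is to read the left-hand side of \eqref{eq:uncertainty-principle} as a Cauchy--Schwarz estimate in the Hilbert space $L_2(\initsp)$. Put $a\colonequals\wm{p(z)}_\rho$ and $b\colonequals\wm{p(u)}_\rho$, both real, and introduce
\[
\xi\colonequals\lrp{p(z)-a}\rho^{\frac12}\,,\qquad\eta\colonequals\lrp{p(u)-b}\rho^{\frac12}\,,
\]
which belong to $L_2(\initsp)$ because $\rho$ is amenable (so $p(z)^2,p(u)^2$ are $\rho\,$-integrable). By \eqref{eq:variance-norm2} one has $V_{\rho}\lrp{p(z)}^2=\no{\xi}_2^2$ and $V_{\rho}\lrp{p(u)}^2=\no{\eta}_2^2$, hence the left-hand side of \eqref{eq:uncertainty-principle} equals $\no{\xi}_2^2\no{\eta}_2^2$, and Cauchy--Schwarz gives $\no{\xi}_2^2\no{\eta}_2^2\geq\abs{\ip{\xi}{\eta}_2}^2$. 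It thus suffices to prove that $\abs{\ip{\xi}{\eta}_2}^2=(z,Su)^2+\abs{\im\ip zu}^2$.

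First I would compute the inner product. Since $p(z),p(u)$ are selfadjoint, $a,b$ are real and $\rho^{\frac12}$ is selfadjoint, expanding and using cyclicity of the trace yields
\[
\ip{\xi}{\eta}_2=\tr{\rho\lrp{p(z)-a}\lrp{p(u)-b}}=\tr{\rho p(z)p(u)}-ab\,.
\]
By \eqref{eq:covariance-entries} the real part of this number is exactly $(z,Su)$. For the imaginary part I would exploit $\ip{\eta}{\xi}_2=\cc{\ip{\xi}{\eta}_2}$, which after the analogous expansion forces $\tr{\rho p(u)p(z)}=\cc{\tr{\rho p(z)p(u)}}$; subtracting and invoking the commutation relation \eqref{commutation-sigma}, i.e. $[p(z),p(u)]=2i\im\ip zu$ together with $\tr\rho=1$, gives
\[
2i\,\im\tr{\rho p(z)p(u)}=\tr{\rho p(z)p(u)}-\tr{\rho p(u)p(z)}=\tr{\rho[p(z),p(u)]}=2i\,\im\ip zu\,,
\]
so that $\im\ip{\xi}{\eta}_2=\im\ip zu$. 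Combining the two parts, $\abs{\ip{\xi}{\eta}_2}^2=(z,Su)^2+\abs{\im\ip zu}^2$, which is \eqref{eq:uncertainty-principle}; discarding the nonnegative term $(z,Su)^2$ yields the stated uncertainty principle.

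The main obstacle is not the algebra but the analytic justification of the trace manipulations for the unbounded operators $p(z),p(u)$: one must check that $\xi,\eta\in L_2(\initsp)$, that the cyclic rearrangement $\tr{\rho^{\frac12}\lrp{p(z)-a}\lrp{p(u)-b}\rho^{\frac12}}=\tr{\rho\lrp{p(z)-a}\lrp{p(u)-b}}$ is legitimate, and that $\tr{\rho p(z)p(u)}$ is finite. All of these are supplied by amenability through Theorem~\ref{thm:integrable_norm} and the identity \eqref{eq:variance-norm2}, which recast the relevant quantities as honest $L_2(\initsp)$ inner products of vectors obtained by applying the centered field operators to $\rho^{\frac12}$; once everything is phrased inside $L_2(\initsp)$, the commutator contributes only the scalar $2i\im\ip zu$ on the exponential domain and no further domain issues arise.
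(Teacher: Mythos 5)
Your proof is correct and follows essentially the same route as the paper: both center the field operators, read the variances as squared $L_2(\initsp)$-norms of $\lrp{p(z)-\wm{p(z)}_\rho}\rho^{\frac12}$ via \eqref{eq:variance-norm2}, identify the inner product of the two centered vectors as $(z,Su)+i\im\ip zu$ using \eqref{eq:covariance-entries} for the real part and the commutation relation \eqref{commutation-sigma} for the imaginary part, and conclude by Cauchy--Schwarz. The only difference is one of detail: you spell out the commutator computation $\tr{\rho[p(z),p(u)]}=2i\im\ip zu$ and the trace-cyclicity justifications, which the paper compresses into ``it is easy to check.''
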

\begin{proof}
It is easy to check from \eqref{commutation-sigma} that $\im \tr{\rho p(z)p(u)}=\im\ip{z}{u}$ and by \eqref{eq:covariance-entries},
\begin{align}\label{eq:aux-by-cSi}
\begin{split}
\big\langle\lrp{p(z)-\wm{p(z)}_\rho } \rho^{\frac{1}{2}},\lrp{p(u)-\wm{p(u)}_\rho } \rho^{\frac{1}{2}}\big\rangle&=\tr{\rho p(z)p(u)}-\wm{p(z)}_\rho\wm{p(u)}_\rho\\
&=(z,Su)+i\im\ip zu\,.
\end{split}
\end{align}
Therefore, we get by Schwartz inequality, \eqref{eq:variance-norm2} and \eqref{eq:aux-by-cSi} that 
\begin{align*}
V_{\rho}\lrp{p(z)}^2V_{\rho}\lrp{p(u)}^2&=\no{\lrp{p(z)-\wm{p(z)}_\rho } \rho^{\frac{1}{2}}}_{2}^{2}\no{\lrp{p(u)-\wm{p(u)}_\rho } \rho^{\frac{1}{2}}}_{2}^{2}\\
&\geq\abs{\big\langle\lrp{p(z)-\wm{p(z)}_\rho } \rho^{\frac{1}{2}},\lrp{p(u)-\wm{p(u)}_\rho } \rho^{\frac{1}{2}}\big\rangle}^2\\
&=(s,Su)^2+\abs{\im\ip zu}^2,
\end{align*}
as required.
\end{proof}

For an amenable Gaussian state $\rho(w,S)$, one has by virtue of \eqref{eq:covariance-entries} that
\begin{gather}\label{eq:positive-CM}
(z,Sz)=\wm{p(z)^2}_\rho-\wm{p(z)}_\rho^2=V_\rho(p(z))^2\,,\quad z\in\ell_2(\mathbb{N})\,.
\end{gather}
In this fashion, \eqref{eq:uncertainty-principle} implies 
\begin{gather}\label{eq:MatrixS-inequality}
(z,Sz)(u,Su)\geq (z,Su)^2+\abs{\im\ip{z}{u}}^2\,,\quad\mbox{for all}\quad z,u\in\ell_2(\N)\,.
\end{gather}

\begin{corollary}\label{cor:uncertainty-principle}
If $\rho(w, S)$ is an amenable Gaussian state, then the cova\-riance operator $S\in{\mathcal B}_{\mathbb R}({\ell_2(\mathbb{N})})$ is positive and invertible, with $S^{-1}\in{\mathcal B}_{\mathbb R}({\ell_2(\mathbb{N})})$. Besides, 
\begin{gather}\label{eq:J-condition}
S - iJ\geq 0
\end{gather}
where $J$ is the operator of multiplication by $-i$ on $\ell_2(\mathbb{N})$.
\end{corollary}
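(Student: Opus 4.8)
The plan is to read off all three assertions from the two scalar relations \eqref{eq:positive-CM} and \eqref{eq:MatrixS-inequality} already established for amenable Gaussian states, so that no genuinely new analytic input is needed beyond interpreting the mixed object $S-iJ$ correctly. First I would settle positivity: by \eqref{eq:positive-CM} one has $(z,Sz)=V_\rho(p(z))^2\geq0$ for every $z\in\ell_2(\mathbb N)$, and since $S$ is real and selfadjoint, nonnegativity of its quadratic form is exactly $S\geq0$.

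For invertibility I would feed the skew direction into \eqref{eq:MatrixS-inequality}. Taking $u=Jz=-iz$ (which again lies in $\ell_2(\mathbb N)$ and satisfies $\no{Jz}=\no z$) and using $\im\ip{z}{-iz}=-\no z^2$, the inequality \eqref{eq:MatrixS-inequality} gives $(z,Sz)\,(Jz,SJz)\geq\no z^4$. Bounding the second factor by $(Jz,SJz)\leq\no S\,\no z^2$ then yields the uniform lower bound $(z,Sz)\geq\no z^2/\no S$. A bounded selfadjoint operator whose quadratic form is bounded below by $\no z^2/\no S>0$ has $0$ outside its spectrum, so $S$ is invertible with $S^{-1}\in\mathcal B_{\mathbb R}(\ell_2(\mathbb N))$ and $\no{S^{-1}}\leq\no S$.

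The remaining and most delicate point is \eqref{eq:J-condition}. I would first record the elementary identity $(z,Ju)=\im\ip zu$, which follows from $Ju=-iu$ and $(\cdot,\cdot)=\re\ip\cdot\cdot$. Reading $S-iJ\geq0$ as positivity of the associated Hermitian form on the complexification of the real Hilbert space, a short computation (the cross terms involving $S$ cancelling by symmetry) gives, for $\psi=z_1+iz_2$ with $z_1,z_2$ real, $\ip{\psi}{(S-iJ)\psi}=(z_1,Sz_1)+(z_2,Sz_2)+2\im\ip{z_1}{z_2}$. Positivity of this expression then follows by applying \eqref{eq:MatrixS-inequality} to the pair $(z_1,z_2)$, which yields $(z_1,Sz_1)(z_2,Sz_2)\geq\abs{\im\ip{z_1}{z_2}}^2$, and combining with the arithmetic--geometric--mean inequality to get $(z_1,Sz_1)+(z_2,Sz_2)\geq2\abs{\im\ip{z_1}{z_2}}\geq-2\im\ip{z_1}{z_2}$.

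I expect the main obstacle to be one of interpretation rather than estimation: pinning down the precise meaning of the mixed real/imaginary object $S-iJ$ and verifying that the scalar inequality \eqref{eq:MatrixS-inequality}---which is literally the $2\times2$ minor condition for the Hermitian form $(z,Su)-i\im\ip zu$---is strong enough to force full positive semidefiniteness. The point that makes this go through is that a general vector of the complexification has the form $z_1+iz_2$, so the single instance of \eqref{eq:MatrixS-inequality} at $(z_1,z_2)$, together with the arithmetic--geometric--mean step, already controls $\ip{\psi}{(S-iJ)\psi}$ for arbitrary $\psi$; the sign ambiguity in the convention for $\im\ip\cdot\cdot$ is harmless, since the estimate is taken against the absolute value.
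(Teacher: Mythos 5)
Your proof is correct, but it takes a genuinely different route from the paper's on the key point, namely how \eqref{eq:J-condition} is interpreted and established. The paper never leaves the real Hilbert space $(\ell_2(\mathbb{N}),(\cdot,\cdot))$: there $iJ$ is literally the identity operator (multiplication by $i$ composed with multiplication by $-i$), so \eqref{eq:J-condition} is read as the statement $(z,Sz)\geq\no z^2$, which the paper extracts from its chain \eqref{eq:qux-J-condition} together with the invariance $(iz,Siz)=(z,Sz)$; that is, plugging $u=iz$ into \eqref{eq:MatrixS-inequality} gives $\no z^4\leq (iz,Siz)(z,Sz)=(z,Sz)^2$. You instead pass to the complexification of the real space and prove positivity of the Hermitian form $\ip{\psi}{(S-iJ)\psi}=(z_1,Sz_1)+(z_2,Sz_2)+2\im\ip{z_1}{z_2}$ (your computation of this form, including the vanishing of the cross terms, is right) via the two-variable inequality \eqref{eq:MatrixS-inequality} plus the arithmetic--geometric-mean step. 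The two readings are equivalent: your form evaluated at $\psi=z_1+iz_2$ with $z_2=\mp iz_1$ (sign fixed by the convention for $\ip\cdot\cdot$) collapses to $2\lrp{(z_1,Sz_1)-\no{z_1}^2}$, and conversely $S\geq I$ together with Cauchy--Schwarz and AM--GM yields positivity of the form; so you do prove the corollary as stated, though it would be worth a sentence making this equivalence explicit, since the statement concerns operators on $\ell_2(\mathbb{N})$ itself rather than on its complexification. For invertibility the difference is smaller: both you and the paper feed the pair $(z,\pm iz)$ into \eqref{eq:MatrixS-inequality}, but the paper turns this into $\no z\leq\no S\no{Sz}$ and then argues injectivity (via \eqref{eq:variance-norm2} and unboundedness of $p(z_0)$ for $z_0\neq0$) and dense range separately, whereas you obtain the uniform form bound $(z,Sz)\geq\no z^2/\no S$ and invoke the spectral theorem for bounded selfadjoint operators, which is a bit cleaner and makes the separate kernel argument unnecessary. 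What the paper's route buys is the sharper constant: it produces $S\geq I$ directly, which is exactly what the remark following the corollary uses to conclude $\no{S^{-1}}\leq1\leq\no S$; your intermediate bound $(z,Sz)\geq\no z^2/\no S$ is weaker, although the full strength is recoverable from your complexified inequality as indicated above.
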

\begin{proof}
The operator $S$ is positive due to \eqref{eq:variance-norm2}  and \eqref{eq:positive-CM}. Besides, if $Sz_0=0$ then $V_\rho(p(z_0))^2=0$ and \eqref{eq:variance-norm2} implies  that $p(z_0)=\tr{\rho p(z_0)}I$, i.e., $z_0=0$, since $p(z)$ is unbounded for $z\neq0$. Thereby, $\ker S=\{0\}$ and $\cc{\ran S}=\ell_2(\mathbb{N})$. Now, $\no{Sz}\leq\no S\no z$, for $z\in\ell_2(\N)$, since $S$ is bounded. Thus, taking into account \eqref{eq:MatrixS-inequality} and the Schwartz inequality, 
\begin{align}\label{eq:qux-J-condition}
\begin{split}
\no z^4&=\abs{\im\ip {iz}z}^2\leq (iz,Siz)(z,Sz)\\&\leq\no {iz}\no{Siz}\no{z}\no{Sz}\leq \no{z}^3\no S\no{Sz}\,,
\end{split}
\end{align}
which implies $\no z\leq \no S\no {Sz}$. Thereby, letting $z=S^{-1}u$, for $u\in\ran S$, one arrives at $\no {S^{-1}u}\leq \no S\no {u}$, i.e., $S^{-1}$ is bounded and $\no{S^{-1}}\leq\no S$, viz. $S^{-1}\in{\mathcal B}_{\mathbb R}({\ell_2(\mathbb{N})})$, since $S$ and $S^{-1}$ are closed simultaneously. Observe that 
$(iz,Siz)=\re(-i\ip{z}{Siz})=\re\ip{S^*z}{z}=(z,Sz)$ and $(z,z)=(z,iJz)$. In this fashion, \eqref{eq:qux-J-condition} produces $(z,iJz)=\no z^2\leq (z,Sz)$, i.e., $0\leq (z,(S-iJ)z)$, whence it follows \eqref{eq:J-condition}.
\end{proof}

\begin{remark}
The covariance operator of an amenable Gaussian state $\rho(w,S)$ satisfies
\begin{gather*}
0<\|{S^{-1}}\|\leq1\leq\no S<\infty\,.
\end{gather*}
Indeed, Corollary~\ref{cor:uncertainty-principle} implies that $S,S^{-1}\in{\mathcal B}_{\mathbb R}({\ell_2(\mathbb{N})})$, they are positive and $\no z^2\leq(z,Sz)\leq \no z^2\no S$, i.e., $\no S\geq1$. In addition, for $z=S^{-1}u$, it follows that $\no{S^{-1}u}^2\leq (S^{-1}u,u)\leq \no u^2\no{S^{-1}}$, viz. $\no {S^{-1}}\leq1$, as expected.
\end{remark}

\subsection*{Acknowledgment}
This work was partially supported by CONACYT-Mexico Grants CBF2023-2024-1842, CF-2019-684340 and UAM-DAI 2024: ``Enfoque Anal\'itico-Combinatorio y su Equivalencia de Estados Gaussianos". The authors are grateful to the anonymous referee for the attentive reading, comments and recommendations on the manuscript.


\def\cprime{$'$} \def\lfhook#1{\setbox0=\hbox{#1}{\ooalign{\hidewidth
  \lower1.5ex\hbox{'}\hidewidth\crcr\unhbox0}}} \def\cprime{$'$}
  \def\cprime{$'$} \def\cprime{$'$} \def\cprime{$'$} \def\cprime{$'$}
  \def\cprime{$'$} \def\cprime{$'$}
\providecommand{\bysame}{\leavevmode\hbox to3em{\hrulefill}\thinspace}
\providecommand{\MR}{\relax\ifhmode\unskip\space\fi MR }
\providecommand{\MRhref}[2]{%
  \href{http://www.ams.org/mathscinet-getitem?mr=#1}{#2}
}
\providecommand{\href}[2]{#2}

\end{document}